\numberwithin{equation}{section}
\begin{document}

\newcommand{\beq}{\begin{equation}}
\newcommand{\eeq}{\end{equation}}
\newcommand{\bea}{\begin{eqnarray}}
\newcommand{\eea}{\end{eqnarray}}
\newcommand{\sinhq}{\sinh}
\newcommand{\coshq}{\cosh}
\newcommand{\einde}{$\ \ \ \Box$ \vspace{5mm}}
\newcommand{\De}{\Delta}
\newcommand{\de}{\delta}
\newcommand{\Z}{{\mathbb Z}}
\newcommand{\N}{{\mathbb N}}
\newcommand{\C}{{\mathbb C}}
\newcommand{\Cs}{{\mathbb C}^{*}}
\newcommand{\R}{{\mathbb R}}
\newcommand{\Q}{{\mathbb Q}}
\newcommand{\T}{{\mathbb T}}
\newcommand{\cW}{{\cal W}}
\newcommand{\cJ}{{\cal J}}
\newcommand{\cE}{{\cal E}}
\newcommand{\cA}{{\cal A}}
\newcommand{\cR}{{\cal R}}
\newcommand{\cP}{{\cal P}}
\newcommand{\cM}{{\cal M}}
\newcommand{\cN}{{\cal N}}
\newcommand{\cI}{{\cal I}}
\newcommand{\cB}{{\cal B}}
\newcommand{\cD}{{\cal D}}
\newcommand{\cC}{{\cal C}}
\newcommand{\cL}{{\cal L}}
\newcommand{\cF}{{\cal F}}
\newcommand{\cH}{{\cal H}}
\newcommand{\cS}{{\cal S}}
\newcommand{\cT}{{\cal T}}
\newcommand{\cU}{{\cal U}}
\newcommand{\cQ}{{\cal Q}}
\newcommand{\cV}{{\cal V}}
\newcommand{\cK}{{\cal K}}
\newcommand{\diag}{{\rm diag}}

\title{ Tzitzeica solitons vs. \\relativistic Calogero-Moser 3-body clusters }
\author{J. J. C. Nimmo\thanks{Department of Mathematics, University of Glasgow,
Glasgow, G12 8QW, UK} \and 
S. N. M. Ruijsenaars
\thanks{Department of Applied Mathematics, University of Leeds, Leeds LS2 9JT, UK}
\thanks{Department of Mathematical Sciences, Loughborough University, Loughborough LE11 3TU, UK} 
}
\date{}
\maketitle

\begin{abstract}
\noindent
We establish a connection between the hyperbolic relativistic Calogero-Moser systems and a class of soliton solutions to the Tzitzeica equation (aka the Dodd-Bullough-Zhiber-Shabat-Mikhailov equation). In the $6N$-dimensional phase space~$\Omega$ of the relativistic systems with $2N$ particles and $N$ antiparticles, there exists a $2N$-dimensional Poincar\'e-invariant submanifold $\Omega_P$ corresponding to $N$ free particles and $N$ bound particle-antiparticle pairs in their ground state. The Tzitzeica $N$-soliton tau-functions under consideration are real-valued, and obtained via the dual Lax matrix evaluated in points of $\Omega_P$. This correspondence leads to a picture of the soliton as a cluster of two particles and one antiparticle in their lowest internal energy state.
\end{abstract}

\tableofcontents

\newtheorem{lem}{Lemma}[section]
\newtheorem{theor}[lem]{Theorem}
\newtheorem{cor}[lem]{Corollary}
\newtheorem{prop}[lem]{Proposition}



\renewcommand{\theequation}{\thesection.\arabic{equation}}

\setcounter{equation}{0}
\section{Introduction}

The equation
\beq\label{Tz}
\Psi_{uv}=e^{\Psi}-e^{-2\Psi}
\eeq
has a curious history. It first arose a century ago in the work of the Rumanian mathematician Tzitzeica~\cite{tzit1,tzit2}. He arrived at it from the viewpoint of the geometry of surfaces, obtaining an associated linear representation and a B\"acklund transformation.

For many decades after Tzitzeica's work, the equation (\ref{Tz}) was not studied, two papers by Jonas~\cite{jo1,jo2} being a notable exception. Thirty years ago, it was reintroduced within the area of soliton theory, independently by Dodd and Bullough~\cite{dobu} and Zhiber and Shabat~\cite{zhsh}, cf. also Mikhailov's paper~\cite{mikh}. In this setting, (\ref{Tz}) is viewed as an integrable relativistic theory for a field $\Psi(t,y)$ in two space-time dimensions, written in terms of light cone (characteristic) coordinates,
\beq\label{ty}
t=u-v,\ \ \ \ y=u+v.
\eeq
 Accordingly, the PDE (\ref{Tz}) is known under various names, and has been studied from several perspectives, including geometry~\cite{tzit1,tzit2,jo1,jo2}, classical soliton theory~\cite{dobu,zhsh,mikh,schi1,weis,atni,muco,scro,schi2,brez,chsh,kash,bozy,hita,
wazw,miro}, and quantum soliton theory~\cite{smir,frmu,acer,falu,bull,asfe}. Moreover, it has shown up within the context of gas dynamics~\cite{gaff,rosc}.

The principal aim of this paper is to tie in a class of soliton solutions to the Tzitzeica equation (\ref{Tz}) with integrable particle dynamics of relativistic Calogero-Moser type. (A survey covering both relativistic and nonrelativistic Calogero-Moser systems can be found in~\cite{scmt}.) The intimate relation of the latter integrable particle systems to soliton solutions of various evolution equations (including the sine-Gordon, Toda lattice, KdV and modified KdV equations) was already revealed in the paper in which they were introduced~\cite{rusc}, and was elaborated on in~\cite{aa2}. Later on, the list of equations whose soliton solutions are connected to the relativistic Calogero-Moser systems was considerably enlarged~\cite{fdss, KP, ade}. In all of these cases, the $N$ solitons correspond to $N$ point particles. 

The novelty of the present soliton-particle correspondence is that the Tzitzeica $N$-soliton solutions at issue correspond to an integrable reduction of the $3N$-body relativistic Calogero-Moser dynamics. Physically speaking, a Tzitzeica soliton may be viewed as a lowest energy bound state of three Calogero-Moser `quarks', one of which has negative charge, whereas the other two have positive charge.

A crucial ingredient for establishing the correspondence is the relation between an extensive class of 2D Toda solitons and the relativistic Calogero-Moser systems, already studied in~\cite{KP}. Indeed, the relation can be combined with the link between the Tzitzeica equation and the 2D Toda equation. The latter link has been known for quite a while, and we proceed to sketch it in a form that suits our later requirements.

Assume that $\phi_n$ is a solution to the 2D Toda equation in the form~\cite{mikh2}
\beq\label{To}
\phi_{n,uv}=\exp(\phi_n-\phi_{n-1})-\exp(\phi_{n+1}-\phi_n),\ \ \ \ n\in\Z,
\eeq
which has the symmetry property
\beq\label{sym}\phi_{-n}=-\phi_{n},
\eeq
and which is moreover 3-periodic, i.e.,
\beq\label{per}
\phi_{n+3}=\phi_n.
\eeq
Then one has in particular
\beq
\phi_0=0,\ \ \ \phi_2=-\phi_1,
\eeq
so that 
\beq
\Psi = \phi_1
\eeq
satisfies (\ref{Tz}). Conversely, a solution $\Psi$ to (\ref{Tz}) yields a solution $\phi_n$ to (\ref{To}) satisfying (\ref{sym}) and (\ref{per}) when one sets
\beq
\phi_{3k}=0,\ \ \ \phi_{1+3k}=-\phi_{2+3k}=\Psi,\ \ \ \ k\in\Z.
\eeq

The point is now that there exist soliton solutions to (\ref{To}) that can be made to satisfy the extra requirements (\ref{sym})--(\ref{per}), hence yielding soliton solutions to (\ref{Tz}). The relevant 2D Toda solitons are those found by the Kyoto school~\cite{daka,jimi}. These solitons also formed the starting point for~\cite{KP}. They are most easily expressed in tau-function form, the relation of $\tau_n$ to $\phi_n$ being given by
\beq\label{phitau}
\phi_n=\ln (\tau_{n+1}/\tau_{n}),\ \ \ \ n\in\Z.
\eeq
In terms of $\tau_n$, the evolution equation becomes
\beq\label{Totau}
\partial_u \partial_v \ln \tau_n =1- \tau_{n-1}\tau_{n+1}/\tau_n^2,
\eeq
and the extra features (\ref{sym}) and (\ref{per}) amount to
\beq\label{symt}
\tau_{-n+1}=\tau_n,
\eeq
and
\beq\label{pert}
\tau_{n+3}=\tau_n.
\eeq

As we show in Section~2, one can make special parameter choices in the 2D Toda $2N$-soliton solutions $\tau_n(u,v)$ so that they satisfy (\ref{symt})--(\ref{pert}). The function
\beq\label{Psi}
\Psi =\ln (\tau_2/\tau_1)
\eeq
then satisfies (\ref{Tz}), and can be viewed as a Tzitzeica $N$-soliton solution. (In Lie algebraic terms, the successive requirements (\ref{symt}) and (\ref{pert}) amount to reductions $A_{\infty}\to B_{\infty}\to A_2^{(2)}$.)

More specifically, the 2D Toda $2N$-solitons of Section~2 are of the form
\beq\label{tauTo}
\tau_n(u,v)=\det ({\bf 1}_{2N}+D(n,u,v)C),
\eeq
where the dependence of the $2N\times 2N$  (Cauchy type) matrix $C$ and diagonal matrix $D$ on the parameters $a,b,\xi^0\in\C^{2N}$ is suppressed.  To satisfy the $B_{\infty}$ restriction (\ref{symt}) and to prepare for the 3-periodicity restriction (\ref{pert}), these $6N$ parameters are expressed in terms of $2N$ parameters $\phi,\theta\in\C^N$ and a coupling parameter $c$. We then show that the tau-functions have period $l$ for $c$ equal to $\pi/l$, so that the Tzitzeica restrictions are satisfied for $c=\pi/3$.

In Section~3 we make a further parameter change, trading $\phi_1,\dots ,\phi_N$ for `positions' $q_1,\ldots,q_N$. This reparametrization ensures in particular that the summand involving all exponentials has coefficient 1. Restricting attention to the parameter set
\beq\label{cPT}
\cP= \{ (q,\theta)\in\R^{2N}\mid \theta_N<\cdots<\theta_1\},
\eeq
the tau-functions take their simplest and most natural form. In particular, for parameters in $\cP$ the tau-functions are real-valued. Furthermore, their space-time dependence is such that the $q_j$'s and $\theta_j$'s can be interpreted as relativistic positions and rapidities. Last but not least, it is in this form that the $c=\pi/3 $ Tzitzeica $N$-soliton tau-functions can be most easily compared to the tau-functions arising in the framework of the $3N$-body relativistic Calogero-Moser systems.

Section~4 is devoted to this comparison. The choice of regime for the Calogero-Moser systems is the same as for almost all other soliton equations. Specifically, the regime is the hyperbolic one, with the Poincar\'e group generators given by
\beq\label{Poi}
H=\frac{M_0}{2}(S_{+}+S_{-}),\ \ \ \ P=\frac{M_0}{2}(S_{+}-S_{-}),\ \ \ \ B=\sum_{i=1}^{N_{+}}x_i^{+}
+\sum_{j=1}^{N_{-}}x_j^{-},
\eeq
\beq\label{Spm}
S_{\pm}=\sum_{1\le i\le N_{+}}\exp(\pm p_i^{+})V_i^{+} +
\sum_{1\le j\le N_{-}}\exp(\pm p_j^{-})V_j^{-},
\eeq
\beq\label{V+}
(V_i^{+})^2    =  \prod_{1\le k\le N_{+}, k\ne i}\left( 1+\frac{\sin^2c}{\sinh^2(x_i^{+}-x_k^{+})/2}\right)
  \prod_{1\le j\le N_{-}} \left( 1-\frac{\sin^2c}{\cosh^2(x_i^{+}-x_j^{-})/2}\right),
\eeq
\beq\label{V-}
(V_j^{-})^2    =    \prod_{1\le l\le N_{-}, l\ne j}\left( 1+\frac{\sin^2c}{\sinh^2(x_j^{-}-x_l^{-})/2}\right)
 \prod_{1\le i\le N_{+}} \left( 1-\frac{\sin^2c}{\cosh^2(x_j^{-}-x_i^{+})/2}\right).
\eeq
Here, the generalized positions and momenta vary over the phase space
\beq
\Omega =\{ (x^{+},x^{-},p^{+},p^{-})\in\R^{2(N_{+}+N_{-})}\mid x_{N_{+}}^{+}<\cdots <x_1^{+},x_{N_{-}}^{-}<\cdots <x_1^{-}\},
\eeq
equipped with the symplectic form
\beq
\omega =\sum_{1\le i\le N_{+}}dx_i^{+}\wedge dp_i^{+}+
\sum_{1\le j\le N_{-}}dx_j^{-}\wedge dp_j^{-}.
\eeq
(Only the Landau-Lifshitz solitons of~\cite{daji} involve the more general elliptic regime~\cite{fdss}.) The coupling~$c$ that is needed, however, differs from the value $\pi/2$ relevant for most soliton equations (which correspond to $A_1^{(1)}$). Indeed, as already mentioned, the connection of the space-time dynamics (\ref{Poi})--(\ref{V-}) to the Tzitzeica tau-functions arises for the `$A_2$'-value
\beq
c=\pi/3.
\eeq

In order to clarify further restrictions and to describe our main result in some detail, a few more ingredients need to be introduced. First, the Poisson commuting `light-cone' Hamiltonians $S_{+}$ and $S_{-}$ can be expressed in terms of an $(N_{+}+N_{-}) \times (N_{+}+N_{-})$ Lax matrix on $\Omega$ via 
\beq
S_{+}={\rm Tr}(L),\ \ \ S_{-}={\rm Tr}(L^{-1}).
\eeq
This matrix has a product structure $D\cM D$, with a diagonal matrix $D$ and a matrix $\cM$ that arises by suitable substitutions in Cauchy's matrix $\cC(x,y)$, i.e., a matrix with elements 
$(x_i-y_j)^{-1}$, whose determinant is given by Cauchy's identity
\beq\label{Cauchy}
|\cC(x,y)|=\frac{\prod_{i<j}(x_i-x_j)(y_j-y_i)}{\prod_{i,j}(x_i-y_j)}.
\eeq
Hence one can explicitly determine the symmetric functions of $L$ and its inverse, and all of these functions Poisson commute.

Specifically, the Lax matrix
\beq
L=D\cM D
\eeq
 is given by
\bea
D  &  =  &  \diag(\exp(p_1^{+}/2)(V_1^{+})^{1/2},\ldots,\exp(p_{N_{+}}^{+}/2)(V_{N_{+}}^{+})^{1/2},
\nonumber \\
  &  &  \exp(p_1^{-}/2)(V_1^{-})^{1/2},\ldots,\exp(p_{N_{-}}^{-}/2)(V_{N_{-}}^{-})^{1/2}),
\eea
\bea
\cM_{ik}  &  =  &  \frac{i\sin c}{\sinh((x_i^{+}-x_k^{+})/2+ic)},
\nonumber\\
\cM_{N_{+}+j,N_{+}+l}  &  =  &  \frac{i\sin c}{\sinh((x_j^{-}-x_l^{-})/2+ic)},
\nonumber\\
\cM_{N_{+}+j,k}  &  =  &  \frac{\sin c}{\cosh((x_j^{-}-x_k^{+})/2+ic)},
\nonumber\\
\cM_{i,N_{+}+l}  &  =  &  -\overline{\cM}_{N_{+}+l,i},
\eea
where $i,k\in \{ 1,\ldots,N_{+}\}$ and $j,l\in\{ 1,\ldots, N_{-}\}$. It is clear from this that $L$ is self-adjoint when $N_{+}$ or $N_{-}$ vanish. Physically speaking, this is the case in which only particles or antiparticles are present. For $N_{+}N_{-}>0$ however, $L$ is not self-adjoint. Instead, $L$ is $J$-self-adjoint, that is, the adjoint $L^{*}$ equals $JLJ$, where
\beq
J=\diag({\bf 1}_{N_{+}},-{\bf 1}_{N_{-}}).
\eeq

When one takes the interparticle distances to infinity, it becomes clear that there is a subset of $\Omega$ on which $L$ is diagonalizable with real eigenvalues. (Of course, for the one-charge case this is true on all of $\Omega$.) However, already for the simplest non-self-adjoint case $N_{+}=N_{-}=1$, complex-conjugate eigenvalues are also present, reflecting the existence of bound states of a particle and antiparticle. More precisely, choosing from now on
\beq\label{cran}
c\in (0,\pi/2),
\eeq
these bound states are encoded by eigenvalues 
\beq
\exp (\hat{\theta}_{+}),\exp(\hat{\theta}_{-}),\ \ \ \ \hat{\theta}_{\pm}=\theta \pm i\kappa,\ \ \ \kappa\in(0,c].
\eeq
Moreover, the case of an eigenvalue pair on the boundary of this sector corresponds to
phase space points
\beq\label{deadb}
x_{+}=x_{-}=x,\ \ p_{+}=p_{-}=p.
\eeq
For $p=0$ this yields a 1-parameter set with minimal energy $H=2M_0\cos c$, cf.~(\ref{Poi})--(\ref{V-}), and, more generally, no oscillation takes place for the 2-parameter set (\ref{deadb}).

The special case just discussed is the only one that can be explicitly understood in an elementary way. Already for $N_{+}+N_{-}=3$ (the simplest case relevant for this paper) a complete account involves considerable analysis. The general case has been elucidated in great detail in~\cite{aa2}, and we need to make extensive use of Sections~2B--6B of that paper, which pertain to the $c$-range (\ref{cran}). The action-angle map constructed there makes it possible to understand all of the Poisson commuting dynamics at once, including their soliton type long-time asymptotics (conservation of momenta and factorized position shifts).

It is beyond our scope to recapitulate even the case $N_{+}+N_{-}=3$, but we do mention the key starting point for the analysis performed in~\cite{aa2}. This is because it clarifies why the diagonal matrix
\beq\label{A1}
A(x)= \diag (\exp(x_1^{+}),\ldots,\exp(x_{N_{+}}^{+}),-\exp(x_1^{-}),\ldots,-\exp(x_{N_{-}}^{-})),
\eeq
is of pivotal importance. This matrix is referred to as the dual Lax matrix, and its relation to $L$ is encoded in the commutation relation
\beq\label{comrel}
 i\cot (c) (LA-AL)=LA+AL-2e\otimes e.
 \eeq
 (This relation is readily verified; we do not need the dyadic in the sequel.)  
 
 One need only inspect (\ref{comrel}) to see that $A$ and $L$ play symmetric roles. In the one-charge case it is possible to diagonalize the self-adjoint matrix $L$ in such a way that $A$ takes essentially the same form in terms of suitable variables (which are just the action-angle variables). This self-duality is no longer present for $N_{+}N_{-}>0$, however. Indeed, $A$ is then still self-adjoint, whereas $L$ is not. Even so, (\ref{comrel}) can again be used to great advantage in the construction of the action-angle map, as detailed in~\cite{aa2}.

We are now prepared to specialize the above to the arena with which the present paper is concerned. First of all, we choose
\beq\label{NN}
N_{+}=2N,\ \ \ N_{-}=N.
\eeq
 This choice ensures that there exists a $2N$-dimensional 
Poincar\'e-invariant submanifold 
\beq\label{TT}
\Omega_P\simeq \cP,
\eeq
with $\cP$ given by (\ref{cPT}), of the $6N$-dimensional phase space $\langle \Omega,\omega\rangle$. This submanifold will be described in detail in Section~4, and we shall presently add a brief qualitative description.

Consider now the solution 
\beq\label{jsol}
\exp(tH-yP)Q,\ \ \ Q=(x,p)\in\Omega,
\eeq
to the joint Hamilton equations for $H$ and $P$. Such a joint solution exists and is given by \eqref{jsol}, since the $H$-flow and $P$-flow commute. Using (\ref{ty}) and (\ref{Poi}), we can rewrite \eqref{jsol} as
\beq\label{uvf}
\exp(M_0(uS_{-}-vS_{+}))Q=(x(u,v),p(u,v)).
\eeq
Next, specializing to
\beq\label{Tspec}
M_0=\sqrt{3},\ \ \ c=\pi/3,
\eeq
we introduce
\beq\label{tauL}
\tau_n(u,v)= \det ({\bf 1}_{3N}+\exp(i\pi (1-2n)/3)A(x(u,v))).
\eeq
For a given point $Q$ in the phase space $\Omega$, this yields a function depending on $n\in\Z$ and $(u,v)\in \R^2$.

We are finally in the position to state the principal result of this paper: When the tau-function (\ref{tauL}) is restricted to the $2N$-dimensional Poincar\'e-invariant subspace $\Omega_P$ of $\Omega$, it coincides with the Tzitzeica tau-function (\ref{tauTo}) evaluated on $\cP$ (\ref{cPT}).
Our demonstration of this equality uses in particular special cases of the fusion identities obtained in~\cite{KP}. (For completeness we add the proof of these specializations in Appendix~A.) The reduction in the matrix size from $3N\times 3N$ to $2N\times 2N$ hinges on all points in $\Omega_P$ yielding pairs of complex-conjugate eigenvalues $\exp(\theta_j\pm i\pi/3)$ for the Lax matrix $L$ on the boundary of the allowed angular sector. For each of these pairs there is an additional eigenvalue $\exp(\theta_j)$, so that the spectrum of $L$ on $\Omega_P$ involves only $N$ degrees of freedom $\theta_1,\ldots ,\theta_N\in \R$ (and not the $3N$ of the full phase space). These variables may be viewed as action variables, and there are $N$ canonically conjugate `angle' variables $q_1,\ldots,q_N\in \R$. As a consequence, $\Omega_P$ can be identified with $\cP$, cf.~(\ref{TT}) and (\ref{cPT}).

A better understanding of how $\Omega_P$ arises as a submanifold of the $6N$-dimensional phase space can only be achieved by invoking a great many details concerning the action-angle map constructed in~\cite{aa2}, which are summarized in Section~4. At this point we only add a few qualitative remarks, so as to render these details more accessible. 

First, the above coordinates are not quite action-angle coordinates. Rather, the precise definition of $\Omega_P$ involves the harmonic oscillator transform. This transform is an extension of the action-angle transform, which takes into account that $\Omega$ contains an open dense subset that is the union of submanifolds for which the angles vary over $\R^{3N-l}\times \T^l$, where $\T \simeq (-\pi,\pi]$ and $l$ takes all values in $ \{ 0,1,\ldots,N\}$. From a physical point of view, these submanifolds can be regarded as the subsets of $\Omega$ on which $l$ particle-antiparticle bound states are present. Now when the bound state internal actions converge to their minima, the $l$-torus collapses to lower-dimensional tori in precisely the same way as for a harmonic oscillator Hamiltonian $\sum_{j=1}^l (p_j^2+x_j^2)$, which motivates the terminology.

In these terms, $\Omega_P$ amounts to the subset that arises from the submanifold with $N$ bound states by taking the torus $\T^N$ to a point, so that each of the pairs is in its ground state; moreover, each of the remaining particles has action and angle variables that are paired with those of the bound states, in such a way that the asymptotic space-time dependence of $\Psi$ is that of $N$ 3-body clusters moving apart, each of the clusters staying together as in the $N=1$ case.

In Section~5 we present a close-up of the latter case. For $N=1$, various questions can be rather easily answered, and this special case is also useful as an illustration of the general case.  In particular, we obtain some explicit information on the 2-dimensional space $\Omega_P$ for $c\in(0,\pi/2)$, and study the Tzitzeica 1-soliton solution corresponding to $c=\pi/3$.

In Section~6 we compare the Tzitzeica solitons under consideration to the ones obtained by Kaptsov and Shanko~\cite{kash} and to the solitons arising by the above two-step reduction (\ref{symt})--(\ref{pert}) from a class of 2D Toda solitons constructed via Darboux transformations. At face value, these two types of Tzitzeica solitons seem different from the ones obtained from the Kyoto solitons in Section~2. As we show, however, the latter are  a subclass of the former.

We have added Section~6 primarily because it yields an affirmative answer to the two natural equality questions at issue, but as a bonus it yields a new insight on the Tzitzeica tau-function $\tau_0$ (given both by (\ref{tauL}) and (\ref{tauTo})): It is in fact positive, and equal to the square of a simpler tau-function occurring in the Kaptsov-Shanko work~\cite{kash}.

In Section~7 we consider further aspects of our results, in the form of several remarks. Specifically, we show that the space-time translation generators restricted to $\Omega_P$ give rise to a reduced integrable system, we introduce and comment on space-time trajectories for the Tzitzeica solitons,  isolate the difference with the setup of~\cite{KP}, and comment on an eventual quantum analog of the correspondence between 3-body clusters and solitons. 

A final remark concerns the solitons of the Demoulin system of equations~\cite{demo,rosc2}. As it turns out, their tau-function form is obtained by taking $c=\pi/6$ in Sections~2 and~ 3. It is a challenging question whether they can also be tied in with the relativistic Calogero-Moser systems. 

We have relegated a proof of the fusion identities used in Section~4 to Appendix~A. In Appendix~B we collect some auxiliary results concerning pfaffians, which we need in Section~6. Finally, Appendix~C contains a sketch of the Darboux type construction of explicit tau-function solutions to the 2D Toda equation (\ref{Totau}). 



\renewcommand{\theequation}{\thesection.\arabic{equation}}

\setcounter{equation}{0}
\section{The soliton reduction 2D Toda $\to$ Tzitzeica}

Our starting point is the 2D Toda $M$-soliton solutions introduced by the Kyoto school~\cite{daka,jimi} in the form
\beq \label{taun}
\tau_n =\sum_{\mu_1,\ldots,\mu_M=0,1}\,
\prod_{1\le j\le
M}\exp(\mu_j \xi_{j,n})\cdot
\prod_{1\le j<k\le M}f_{jk}^{\mu_j \mu_k},
\eeq
\beq \label{eB}
f_{jk}=\frac{(a_j-a_k)(b_j-b_k)}{(a_j-b_k)(b_j-a_k)},\ \
\ \ \ \ j,k=1,\ldots ,M,
\eeq
 \beq \label{xito}
\xi_{j,n}= \xi_{j}^0 + n \ln (a_j/b_j)
+i\sum_{l =1}^{\infty}(t_{l
,+}(a_j^l -b_j^l )+t_{l ,-}
(a_j^{-l }-b_j^{-l })).
  \eeq
  The sequence of `times' $t_{l,\pm}$ encodes the hierarchy of commuting flows. Restricting attention to the 2D Toda equation (\ref{Totau}), one should set
  \beq\label{trest}
  t_{l,+}=t_{l,-}=0,\ \ \ l>1,\ \ \ t_{1,+}=-v,\ \ \ t_{1,-}=u,
  \eeq
  to obtain a solution.
  
We proceed in several steps to show how a suitable specialization of the $3M$ parameters $a,b$ and $\xi^{0}$ gives rise to a tau-function with the `$B_{\infty}$'-symmetry (\ref{symt}). This also involves the choice
  \beq
  M=2N,\ \ \ N\in\N^{*},
  \eeq
which will be in force from now on.

\begin{prop}\label{prop:cauchy}
Define a Cauchy type matrix $C$ with elements
\beq\label{C}
C_{jk}=\frac{a_j-b_j}{a_j-b_k},\ \ \ \ j,k=1,\ldots, M,
\eeq
and a diagonal matrix
\beq\label{D}
D_n=\diag (\exp(\xi_{1,n}),\ldots, \exp(\xi_{M,n})).
\eeq
Then $\tau_n$ 
may be rewritten as
\beq\label{taun det}
\tau_n=|{\bf 1}_M+D_nC|.
\eeq
\end{prop}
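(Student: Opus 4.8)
The plan is to prove the identity $\tau_n = |{\bf 1}_M + D_n C|$ by expanding the determinant on the right via the standard minor expansion and matching it term by term with the explicit multilinear sum (\ref{taun}). First I would recall that for any $M\times M$ matrix $X$, the determinant $|{\bf 1}_M + X|$ equals $\sum_{S\subseteq\{1,\dots,M\}}\det(X_S)$, where $X_S$ is the principal submatrix of $X$ indexed by the rows and columns in $S$. Applying this to $X = D_n C$ and using that $D_n$ is diagonal, the principal minor on a subset $S$ factors as $\det((D_n C)_S) = \bigl(\prod_{j\in S}\exp(\xi_{j,n})\bigr)\,\det(C_S)$, where $C_S$ is the corresponding principal submatrix of the Cauchy type matrix $C$. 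Encoding the subset $S$ by the indicator variables $\mu_j\in\{0,1\}$ then immediately reproduces the first product $\prod_j \exp(\mu_j\xi_{j,n})$ in (\ref{taun}).

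The crux of the argument is therefore to show that for every subset $S$ the Cauchy minor satisfies
\beq
\det(C_S)=\prod_{\substack{j<k\\ j,k\in S}} f_{jk},
\eeq
with $f_{jk}$ as in (\ref{eB}). To this end I would note that $C_{jk}=(a_j-b_j)/(a_j-b_k)$, so that $C_S$ is the product of the diagonal matrix $\diag(a_j-b_j)_{j\in S}$ with the genuine Cauchy matrix having entries $1/(a_j-b_k)$ for $j,k\in S$. Cauchy's identity (\ref{Cauchy}), applied with the substitution $x_j\mapsto a_j$ and $y_k\mapsto b_k$, then gives the determinant of that Cauchy matrix explicitly, and multiplying by $\prod_{j\in S}(a_j-b_j)$ yields
\beq
\det(C_S)=\frac{\prod_{j\in S}(a_j-b_j)\cdot\prod_{\substack{j<k\\ j,k\in S}}(a_j-a_k)(b_k-b_j)}{\prod_{j,k\in S}(a_j-b_k)}.
\eeq
The main obstacle is the bookkeeping that turns this quotient into the clean product $\prod_{j<k}f_{jk}$: one must separate the denominator $\prod_{j,k\in S}(a_j-b_k)$ into its diagonal part $\prod_{j\in S}(a_j-b_j)$ (which cancels the leading numerator factor) and its off-diagonal part, then pair up each $(a_j-b_k)(a_k-b_j)$ for $j<k$ and verify it combines with the numerator factors $(a_j-a_k)(b_k-b_j)$ to give exactly $f_{jk}$. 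This is a routine but sign-sensitive rearrangement, so the care lies entirely in tracking signs and in confirming that the pairing over unordered pairs $\{j,k\}\subseteq S$ matches the range $j<k$ in (\ref{taun}).

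Having established the per-subset identity, I would conclude by summing over all $S$: the determinant expansion becomes $\sum_S \bigl(\prod_{j\in S}\exp(\xi_{j,n})\bigr)\prod_{j<k,\,j,k\in S} f_{jk}$, which is precisely (\ref{taun}) once $S$ is re-expressed through the variables $\mu_j$. I expect no difficulty beyond the Cauchy-minor computation itself; the overall structure is simply the interplay between the Andr\'eief/minor expansion of $|{\bf 1}+D_nC|$ and the multiplicativity of the $f_{jk}$ over subsets. It is worth remarking that the nontrivial content is entirely algebraic and independent of $n$, since $n$ enters only through the diagonal factors $\exp(\xi_{j,n})$; thus the same proof delivers the claimed determinantal form simultaneously for all $n\in\Z$.
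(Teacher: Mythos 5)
Your proposal is correct and follows essentially the same route as the paper, which likewise proves \eqref{taun det} by the principal minor expansion of $|{\bf 1}_M+D_nC|$ together with Cauchy's identity \eqref{Cauchy} to evaluate each minor $\det(C_S)$ as $\prod_{j<k,\,j,k\in S}f_{jk}$. The sign bookkeeping you flag does work out, since $(b_j-b_k)(b_j-a_k)=(b_k-b_j)(a_k-b_j)$, so your quotient matches \eqref{eB} exactly.
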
  
\begin{proof}
	  Using Cauchy's identity \eqref{Cauchy}, the principal minor expansion of the rhs of \eqref{taun det} yields
	  \beq\label{taupm}
	  \tau_n=\sum_{l=0}^M\sum_{I\subset\{ 1,\ldots,M\}, |I|=l}\, \prod_{j\in I}\exp(\xi_{j,n})\cdot \prod_{j,k\in I, j<k}f_{jk},
	  \eeq
	which amounts to (\ref{taun}).
\end{proof}    

Next, we specialize the parameters so as to achieve the $B_\infty$ reduction. 
Specifically, we choose	
 \beq\label{ba}
 b_{M-j+1}=-a_j,\ \ \ \ j=1,\ldots,M,
 \eeq
 \beq\label{xiz}
 \exp(\xi_j^{0})=\exp(\varphi_j)a_{M-j+1}(a_j+a_{M-j+1})^{-1},\ \ \ \ j=1,\ldots,N,
 \eeq
\beq\label{xiz2}
 \exp(\xi_{M-j+1}^{0})=-\exp(\varphi_{j})a_{j}(a_j+a_{M-j+1})^{-1},\ \ \ \ j=1,\ldots,N.
 \eeq
\begin{prop}\label{prop:B}
With \eqref{ba}--\eqref{xiz2} in effect, $\tau_n$ has the $B_\infty$ symmetry \eqref{symt}.
\end{prop}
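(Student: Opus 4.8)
The plan is to verify the symmetry relation $\tau_{-n+1}=\tau_n$ directly from the determinantal representation \eqref{taun det}, using the principal-minor expansion \eqref{taupm} established in Proposition~\ref{prop:cauchy}. Since each term in that expansion is indexed by a subset $I\subset\{1,\ldots,M\}$ and factorizes into the diagonal weights $\prod_{j\in I}\exp(\xi_{j,n})$ times the coupling product $\prod_{j<k\in I}f_{jk}$, the key observation is that the substitution \eqref{ba}, namely $b_{M-j+1}=-a_j$, should induce a natural involution $j\mapsto M-j+1$ on the index set. First I would compute how the factors $f_{jk}$ and the weights $\exp(\xi_{j,n})$ transform under this involution $\sigma(j)=M-j+1$, so that the term indexed by $I$ at level $-n+1$ can be matched with the term indexed by $\sigma(I)$ at level $n$.

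The main computation is to track the $n$-dependence. From \eqref{xito} with the restriction \eqref{trest}, the only $n$-dependent part of $\xi_{j,n}$ is $n\ln(a_j/b_j)$, while the rest is an $n$-independent constant fixed by $\xi_j^0$ and the times. Under \eqref{ba} one has $a_{\sigma(j)}/b_{\sigma(j)}=-a_{M-j+1}/a_j$, so I expect $\ln(a_{\sigma(j)}/b_{\sigma(j)})=-\ln(a_j/b_j)+\text{const}$, which is exactly what is needed to convert level $n$ into level $-n+1$ (the shift by $1$ in \eqref{symt} accounting for the constant). The specific forms \eqref{xiz}--\eqref{xiz2} for $\exp(\xi_j^0)$ and $\exp(\xi_{M-j+1}^0)$ are engineered precisely so that the $\varphi_j$-dependent constants and the Cauchy-type prefactors balance when weights are paired under $\sigma$; I would substitute these and check that $\exp(\xi_{\sigma(j),n})$ equals $\exp(\xi_{j,-n+1})$ up to factors that reorganize correctly inside the full product. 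For the coupling factors, I would show $f_{\sigma(j)\sigma(k)}=f_{jk}$, which follows because $f_{jk}$ is built symmetrically from $a$'s and $b$'s and the substitution $b_{\sigma(j)}=-a_j$ merely permutes the roles of the four factors in \eqref{eB}.

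Concretely, the cleanest route is to establish a matrix-level identity rather than manipulate the sum term-by-term. I would look for a permutation matrix $P$ (realizing $\sigma$) and a diagonal matrix such that conjugating ${\bf 1}_M+D_{-n+1}C$ by these reproduces ${\bf 1}_M+D_nC$ up to a factor with determinant $1$; then \eqref{symt} follows immediately by taking determinants. The relation $f_{\sigma(j)\sigma(k)}=f_{jk}$ together with the weight-pairing then amounts to the statement that $D_{-n+1}C$ and $D_nC$ are conjugate under $P$ combined with a diagonal rescaling that cancels in the determinant. This reduces the proposition to a finite linear-algebra identity that does not require re-expanding the determinant.

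The hard part will be bookkeeping the Cauchy prefactors in \eqref{xiz}--\eqref{xiz2}: the factors $a_{M-j+1}(a_j+a_{M-j+1})^{-1}$ and $-a_j(a_j+a_{M-j+1})^{-1}$ are asymmetric in $j\leftrightarrow M-j+1$, and verifying that their product over a paired subset reorganizes to give exactly the index shift $n\mapsto -n+1$ (and not some spurious $j$-dependent residue) requires care. I would isolate this by first treating the purely $\varphi$- and $a$-dependent constant $\exp(\xi_j^0)\exp(\xi_{M-j+1}^0)$, checking it is symmetric under $\sigma$, and then separately handling the ratio $\exp(\xi_j^0)/\exp(\xi_{\sigma(j),0})$ to confirm it supplies precisely the factor needed to shift the level by one. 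Once this single ratio is pinned down, the rest of the argument is routine substitution.
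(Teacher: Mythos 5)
Your first route---pairing terms of the principal-minor expansion \eqref{taupm} under the involution $\sigma(j)=M-j+1$---is correct and complete in outline, and in fact works more cleanly than you anticipate: with \eqref{ba}--\eqref{xiz2} in force one has the \emph{pointwise} identity $\exp(\xi_{\sigma(j),n})=\exp(\xi_{j,-n+1})$, with no product-level reorganization needed (the ratio $\exp(\xi^0_{M-j+1})/\exp(\xi^0_j)=-a_j/a_{M-j+1}=a_j/b_j$ supplies exactly the unit shift in $n$, and the time-dependent parts are $\sigma$-invariant since $a_{\sigma(j)}-b_{\sigma(j)}=a_j-b_j$), together with $f_{\sigma(j)\sigma(k)}=f_{jk}$. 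Using also the manifest symmetry $f_{jk}=f_{kj}$ from \eqref{eB} (needed because $\sigma$ reverses the ordering), each term $T_{I,n}=\prod_{j\in I}e^{\xi_{j,n}}\prod_{j<k,\,j,k\in I}f_{jk}$ satisfies $T_{I,-n+1}=T_{\sigma(I),n}$, and \eqref{symt} follows by summing over $I$. This is genuinely different from the paper's proof, which stays at the matrix level throughout: it reparametrizes via the $\chi_j$, pulls out $\cJ$, passes by a similarity transformation to the symmetrized form $\tau_n=|{\bf 1}_M+\cD\cA_n\cD\cJ|$ of \eqref{tre}, and then exploits $\cA_{-n+1}=-\cR_M\cA_n^t\cR_M$ together with $\cR_M\cD\cR_M=\cD$ and $\cR_M\cJ\cR_M=-\cJ$. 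Your sum-based argument is more elementary; the paper's symmetrized representation \eqref{tre} pays off later, as it is the starting point for the comparison with the Darboux solitons in Section~6.

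One concrete caveat about your preferred ``cleanest route'': the matrix identity you propose is false as literally stated. Conjugating $D_{-n+1}C$ by the permutation matrix realizing $\sigma$, and using the pointwise weight identity above, yields the matrix with entries $\exp(\xi_{j,n})(a_j-b_j)(a_k-b_j)^{-1}$---the \emph{transposed} Cauchy kernel---whereas $D_nC$ has entries $\exp(\xi_{j,n})(a_j-b_j)(a_j-b_k)^{-1}$. No further diagonal rescaling can repair this, since the required ratio $\lambda_j/\lambda_k=(a_k-b_j)/(a_j-b_k)$ does not separate into a product of a $j$-factor and a $k$-factor. The missing ingredient is transposition invariance of the determinant: writing the two matrices as $E_nG^t$ and $E_nG$ with $E_n$ diagonal and $G_{jk}=(a_j-b_k)^{-1}$, one finishes via $|{\bf 1}_M+E_nG^t|=|{\bf 1}_M+GE_n|=|{\bf 1}_M+E_nG|$. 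This is precisely where the transpose enters the paper's proof, through $\cA_{-n+1}=-\cR_M\cA_n^t\cR_M$ and the final chain of equalities (with the minus sign absorbed by $\cR_M\cJ\cR_M=-\cJ$). So your matrix route is fixable by one elementary determinant identity, and your term-by-term route already circumvents the issue, since the principal-minor expansion is blind to transposition.
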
	  
\begin{proof}
 Setting
	 \beq\label{chi}
	 \chi_j= \exp((\varphi _j+\psi_j)/2),\ \ \ \ j=1,\ldots,M,
	 \eeq
	 with
	 \beq\label{phi}
	 \varphi _{M-j+1}= \varphi _j,\ \ \ \ j=1,\ldots,N,
	 \eeq
	 \beq\label{psi}
	  \psi_j= -iv(a_j+a_{M-j+1})+iu(a_j^{-1}+a_{M-j+1}^{-1}),\ \ \ \ j=1,\ldots,M,
	  \eeq
we clearly have
	  \beq
	  \chi_{M-j+1}=\chi_j,\ \ \ \ j=1,\ldots,M.
	  \eeq
Also, the reparametrized matrix $D_nC$ is the product of the matrix
	  \beq\label{cJ}
	  \cJ= \diag({\bf 1}_N,-{\bf 1}_N),
	  \eeq
and the matrix with $jk$-element
	  \beq\label{repar}
	  \chi_j^2\left(-\frac{a_j}{a_{M-j+1}}\right)^n \frac{a_{M-j+1}}{a_j+a_{M-k+1}},\ \ \ \ j,k=1,\ldots,M.
	  \eeq
  
After a similarity transformation $\tau_n$ may be rewritten as
  \beq
 \label{tre}
 \tau_n=|{\bf 1}_M+\cD \cA_n \cD \cJ|,
 \eeq
 where
 \beq\label{calD}
 \cD= \diag (\chi_1,\ldots,\chi_N,\chi_N,\ldots,\chi_1),
 \eeq
 \beq\label{calA}
 \cA_{n,jk}= \frac{(-a_j)^n(a_{M-k+1})^{-n+1}}{a_j+a_{M-k+1}},\ \ \ j,k=1,\ldots,M.
 \eeq
 From this definition of $\cA_n$ we readily infer that
 \beq
 \cA_{-n+1,jk}=-\cA_{n,M-k+1,M-j+1}.
 \eeq
 Hence we obtain
 \beq
 \cA_{-n+1}=-\cR_M\cA_n^t\cR_M,
 \eeq
 where the superscript denotes the transpose, and $\cR_M$ denotes the $M\times M$ reversal permutation matrix. Now we have
 \beq
 \cR_M\cD\cR_M=\cD,\ \ \ \ \cR_M\cJ\cR_M=-\cJ.
 \eeq
 From these formulae we deduce
 \bea
 \tau_{-n+1}  &  =  &  |{\bf 1}_M-\cD\cR_M\cA_n^t\cR_M\cD\cJ|=|{\bf 1}_M+\cR_M\cD\cA_n^t\cD\cJ\cR_M|
  \nonumber \\
    &  =  & | {\bf 1}_M+\cD\cA_n^t\cD\cJ|=|{\bf 1}_M+\cJ\cD\cA_n\cD|
    \nonumber\\
      &  =  &  \tau_n,
      \eea
as required.
\end{proof}

Finally, we show how by a further specialization of parameters, periodic reductions of the solitons of the $B_\infty$ Toda lattice arise.
We trade the $2N$ parameters $a_1,\ldots,a_M$ for $N$ parameters $\theta_1,\ldots,\theta_N$ and another parameter $c$, which corresponds to the coupling parameter in (\ref{Poi})--(\ref{V-}):
\beq\label{as1}
  a_j=\exp(\theta_j-ic),\ \ \ j=1,\ldots,N,
  \eeq
   \beq\label{as2}
  a_{M-j+1}=-\exp(\theta_{j}+ic),\ \ \ j=1,\ldots,N.
  \eeq                                 

\begin{prop}\label{prop:periodic}
With \eqref{as1}--\eqref{as2} in force, we have an implication
	\beq\label{impl}
	 c=\pi/l \Rightarrow \tau_{n+l}=\tau_n,\ \ \ \ l=1,2,\ldots.
	\eeq
In particular, the tau-function is 3-periodic for $c=\pi/3$, and hence is a tau-function form of the Tzitzeica $N$-soliton solutions.	
\end{prop}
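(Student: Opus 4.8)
The plan is to exploit the fact that in the representation \eqref{tre} the entire $n$-dependence of $\tau_n$ is carried by the single matrix $\cA_n$, so periodicity of $\tau_n$ will follow from a structural relation between $\cA_{n+l}$ and $\cA_n$. First I would read off from the explicit form \eqref{calA} how advancing the index by $l$ acts entrywise:
\[
\cA_{n+l,jk}=\left(\frac{-a_j}{a_{M-k+1}}\right)^{l}\cA_{n,jk},
\]
which says precisely that $\cA_{n+l}=\diag(u)\,\cA_n\,\diag(v)$, where $u_j=(-a_j)^{l}$ rescales the rows and $v_k=(a_{M-k+1})^{-l}$ rescales the columns. Thus the shift $n\mapsto n+l$ is realized as an entrywise multiplication of $\cA_n$ by a rank-one (outer-product) pattern.

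The key step, and the place where the resonance condition enters, is to evaluate the diagonal ratio $u_jv_j=(-a_j/a_{M-j+1})^{l}$ under the parametrization \eqref{as1}--\eqref{as2}. Splitting $\{1,\dots,M\}$ into its lower half $\{1,\dots,N\}$ and upper half $\{N+1,\dots,M\}$ (writing $j=M-p+1$ on the latter), one finds the single-index ratios $-a_j/a_{M-j+1}=e^{\mp2ic}$. Hence $u_jv_j=e^{\mp2ilc}$, and for $c=\pi/l$ this collapses to $e^{\mp2\pi i}=1$ for \emph{every} $j$. Therefore $\diag(v)=\diag(u)^{-1}=:G$, and the entrywise rescaling is in fact a diagonal similarity, $\cA_{n+l}=G^{-1}\cA_n G$.

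Granting this identity, the conclusion is immediate: since $\cD$, $\cJ$ and $G$ are all diagonal they commute pairwise, so
\[
\cD\,\cA_{n+l}\,\cD\,\cJ=\cD G^{-1}\cA_n G\,\cD\,\cJ=G^{-1}\bigl(\cD\cA_n\cD\cJ\bigr)G,
\]
whence ${\bf 1}_M+\cD\cA_{n+l}\cD\cJ=G^{-1}\bigl({\bf 1}_M+\cD\cA_n\cD\cJ\bigr)G$. Taking determinants and using multiplicativity cancels $G$, giving $\tau_{n+l}=\tau_n$; specializing to $l=3$ yields the 3-periodicity \eqref{pert}, so that $\Psi=\ln(\tau_2/\tau_1)$ is a Tzitzeica $N$-soliton. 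I expect the only genuinely delicate point to be the sign-and-rapidity bookkeeping behind $u_jv_j=1$: the cancellation must hold uniformly across both halves of the index set despite the asymmetric roles that $j$ and $M-k+1$ play in \eqref{calA}, and it is exactly here that the condition $c=\pi/l$ is used through $e^{\pm i\pi}=-1$ (more precisely $e^{\pm 2\pi i}=1$). Everything else is the formal observation that an entrywise row/column rescaling of a matrix inside $|{\bf 1}_M+\cdots|$ by a pattern with $u_jv_j\equiv1$ is a similarity transformation, hence determinant-preserving.
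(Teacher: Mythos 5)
Your proof is correct: the entrywise relation $\cA_{n+l,jk}=(-a_j/a_{M-k+1})^l\,\cA_{n,jk}$ follows directly from \eqref{calA}, the ratios $-a_j/a_{M-j+1}=e^{\mp 2ic}$ are right under \eqref{as1}--\eqref{as2}, and the resulting diagonal similarity $\cA_{n+l}=G^{-1}\cA_n G$ (using that $\cD$, $\cJ$, $G$ are diagonal and hence commute) gives $\tau_{n+l}=\tau_n$ by determinant invariance. However, your route differs from the paper's. The paper does not use the symmetrized representation \eqref{tre} here at all; it substitutes the parametrization \eqref{as1}--\eqref{as2} into the exponentials, obtaining \eqref{xin1}--\eqref{xin2}, which show that each $\exp(\xi_{j,n})$ depends on $n$ only through $e^{\mp 2inc}$ and is therefore itself $l$-periodic when $c=\pi/l$; periodicity of $\tau_n$ is then termwise obvious in the sum \eqref{taun} (equivalently, $D_{n+l}=D_n$ entrywise in \eqref{taun det}). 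Note that in your representation entrywise periodicity genuinely fails --- for instance $u_jv_k=e^{l(\theta_j-\theta_k)}$ for $j,k\le N$, so $\cA_{n+l}\ne\cA_n$ --- which is why you need the conjugation argument, whereas the paper's choice of representation makes the periodicity visible without any determinant manipulation. What your approach buys is a structural insight (the shift $n\mapsto n+l$ acts as a diagonal similarity, a symmetry of the determinant rather than of the matrix); what the paper's buys is economy, since the explicit formulae \eqref{xin1}--\eqref{chirep} are reused in Section~3 to derive the final form of the tau-functions.
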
              

\begin{proof}                                                   
Using also (\ref{xiz})--(\ref{psi}), we obtain
\beq\label{xin1}
\exp(\xi_{j,n})=\frac{e^{ic}}{2i\sin c}\chi_j^2e^{-2inc},
\eeq
  \beq\label{xin2}
\exp(\xi_{M-j+1,n})=\frac{e^{-ic}}{2i\sin c}\chi_j^2e^{2inc},
\eeq  
\beq\label{chirep}
\chi_j^2=\exp(\varphi_j -2\sin(c)[ve^{\theta_j}+ue^{-\theta_j}]),
  \eeq
where $j=1,\ldots,N$.                               
From this \eqref{impl} is obvious.
\end{proof}
                  


\renewcommand{\theequation}{\thesection.\arabic{equation}}

\setcounter{equation}{0}

\section{Explicit form of the tau-functions}
In this section we first make the soliton formulae described in the last section more explicit, exemplifying them for the simplest case.  A suitable reparametrization of the $\phi_j$'s in (\ref{xiz})--(\ref{xiz2}) then yields a sum formula for the arbitrary-$c$ tau-function that is not only compact and informative, but which can also be tied in with the relativistic Calogero-Moser tau-function (\ref{tauL}) for the special $c$-value $\pi/3$.

First, from \eqref{taun}--\eqref{trest}, the 2-soliton solution of the 2D Toda lattice is
\begin{equation}\label{eq:Toda2}
\tau_n=1+e^{\xi_{1,n}}+e^{\xi_{2,n}}+\frac{(a_1-a_2)(b_1-b_2)}{(a_1-b_2)(b_1-a_2)}e^{\xi_{1,n}+\xi_{2,n}}.	
\end{equation}
After the reduction to the $B_\infty$ Toda lattice described in Proposition~\ref{prop:B}, \eqref{xito}--\eqref{trest}  yield
\begin{equation}\label{eq:Bxi}
	\exp(\xi_{j,n})=\frac{(-1)^na_j^na_{M-j+1}^{1-n}}{a_j+a_{M-j+1}}
	e^{\varphi_j(u,v)},\quad	
	\exp(\xi_{M-j+1,n})=\frac{(-1)^{n+1}a_j^{1-n}a_{M-j+1}^{n}}{a_j+a_{M-j+1}}
	e^{\varphi_j(u,v)},	
\end{equation}
 where                                                       
\begin{equation}\label{eq:Bphi}
	\varphi_j(u,v)=\varphi_j+i[(a_j^{-1}+a_{M-j+1}^{-1})u-(a_j+a_{M-j+1})v],
\end{equation}
and $j=1,\dots,N$, and \eqref{eB} becomes
\begin{equation}\label{eq:Bf}
f_{jk}=\frac{(a_j-a_k)(a_{M-j+1}-a_{M-k+1})}{(a_j+a_{M-k+1})(a_{M-j+1}+a_{k})},
\end{equation}                                                                 
where $j,k=1,\dots,M$. The 2-soliton solution of 2D Toda \eqref{eq:Toda2} becomes the 1-soliton solution of the $B_\infty$ Toda equation, given by
\begin{equation}\label{eq:B1}
\tau_n=1+\frac{(-1)^n(a_1^na_2^{1-n}-a_1^{1-n}a_2^n)}{a_1+a_2}
e^{\varphi_1(u,v)}+	
\frac{a_1a_2(a_1-a_2)^2}{(a_1+a_2)^4}e^{2\varphi_1(u,v)}.	
\end{equation}
                               
The result of the specialization \eqref{as1}--\eqref{as2} 
 is that \eqref{eq:Bxi}--\eqref{eq:Bphi}  become
\begin{equation}\label{eq:pf}
	\exp(\xi_{j,n})=\frac{e^{(1-2n)ic}}{2i\sin c}
	e^{\varphi_j(u,v)},\quad	
	\exp(\xi_{M-j+1,n})=\frac{e^{-(1-2n)ic}}{2i\sin c}
	e^{\varphi_j(u,v)},	
\end{equation}                                                       
\begin{equation}\label{eq:pphi}
	\varphi_j(u,v)=\varphi_j-2\sin(c)[ue^{-\theta_j}+ve^{\theta_j}],
\end{equation}  
where $j=1,\dots,N$.
Using \eqref{as1}--\eqref{as2} in \eqref{eq:Bf}, we finally obtain the following key expressions for the functions $f_{jk}$:
\beq\label{B1}
f_{jk}=\frac{\sinh^2((\theta_j-\theta_k)/2)}{\sinh^2((\theta_j-\theta_k)/2)+\sin^2c},\quad
f_{j,M-k+1}=\frac{\cosh^2((\theta_j-\theta_k)/2)-\sin^2c}{\cosh^2((\theta_j-\theta_k)/2)},
\eeq                                                                                      
for $j,k=1,\dots,N$ and
\beq\label{Bsymm}
f_{jk}=f_{M-j+1,M-k+1},
\eeq
for $j,k=1,\dots,M$. 
Furthermore, the specialization of the 1-soliton solution of $B_\infty$ Toda \eqref{eq:B1} gives 
\begin{equation}\label{eq:p1}
\tau_n=1+\frac{\cos((2n-1)c)}{i\sin c}
e^{\varphi_1(u,v)}-	
\frac{\cos^2c}{4\sin^2c}e^{2\varphi_1(u,v)}.	
\end{equation}         

So far in this section we have simply restated various
formulae in the last section in more explicit form. Now we make a final reparametrization of the phase constants $\varphi_j$.
Specifically, choosing from now on $(q,\theta)\in \cP$ (cf.~(\ref{cPT})), we set
\beq\label{vphi}
\exp(\varphi_j)=2i\sin(c)\exp(q_j/2)F_j(\theta),\ \ \ j=1,\ldots,N,
  \eeq
\beq\label{Fj}
F_j(\theta)=\prod_{k=1,\ldots,M, k\ne j}f_{jk}^{-1/2},\ \ \ j=1,\ldots,N.
\eeq
Substituting this in \eqref{eq:pf}--\eqref{eq:pphi},  we obtain   
\beq\label{xif1}
\exp(\xi_{j,n})=\exp((1-2n)ic+q_j(u,v)/2)F_j(\theta),
\eeq
\beq\label{xif2}
\exp(\xi_{M-j+1,n})=\exp((2n-1)ic+q_j(u,v)/2)F_j(\theta),
\eeq
\beq\label{quv}
q_j(u,v)=q_j-4\sin(c)[ve^{\theta_j}+ue^{-\theta_j}],
\eeq 
where $j=1,\ldots,N$.
Note from \eqref{B1} that the functions $f_{jk}$ are positive and hence so are $F_1,\ldots,F_N$.

We have now arrived at the final form of the tau-function. In the proof of the next proposition we obtain an explicit sum formula to show it is real-valued.
\begin{prop}\label{prop:tz sol}
The tau-function $\tau_n$ given by \eqref{taupm} with the factors defined by \eqref{B1}--\eqref{Bsymm} and \eqref{Fj}--\eqref{quv} is real-valued. In particular, for $c=\pi/3$ it yields a real-valued $N$-soliton solution of the Tzitzeica equation.   
\end{prop}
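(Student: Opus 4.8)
The plan is to exploit the \emph{reversal involution} $\sigma$ on the index set $\{1,\dots,M\}$, defined by $\sigma(j)=M-j+1$, and to show that complex conjugation of the summand in \eqref{taupm} is implemented precisely by $\sigma$. Write the generic summand as $T_I=\prod_{j\in I}\exp(\xi_{j,n})\cdot\prod_{j<k,\,j,k\in I}f_{jk}$, so that $\tau_n=\sum_{I\subset\{1,\dots,M\}}T_I$. The whole proof reduces to the single identity $\overline{T_I}=T_{\sigma(I)}$: once this is in hand, summing over all $I$ and using that $I\mapsto\sigma(I)$ is a bijection of the power set gives $\overline{\tau_n}=\sum_I \overline{T_I}=\sum_I T_{\sigma(I)}=\sum_J T_J=\tau_n$, which is the reality assertion.

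Establishing $\overline{T_I}=T_{\sigma(I)}$ rests on two facts read off from the explicit formulas assembled above. First, the diagonal exponentials come in conjugate pairs: since $q_j(u,v)$ in \eqref{quv} and $F_j(\theta)$ in \eqref{Fj} are real, the only non-real factor in \eqref{xif1}--\eqref{xif2} is the phase $\exp(\pm(1-2n)ic)$, whence $\overline{\exp(\xi_{j,n})}=\exp(\xi_{\sigma(j),n})$ for $j=1,\dots,N$ (and symmetrically for $j=N+1,\dots,M$). Second, all the $f_{jk}$ with $j,k\in\{1,\dots,M\}$ are real --- immediate from \eqref{B1} together with \eqref{Bsymm} --- they are symmetric, $f_{jk}=f_{kj}$ (from the Cauchy form \eqref{eB}), and they satisfy the reversal symmetry $f_{\sigma(j)\sigma(k)}=f_{jk}$, which is exactly \eqref{Bsymm}.

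Combining these, conjugation turns the exponential part of $T_I$ into $\prod_{j\in I}\exp(\xi_{\sigma(j),n})$ and leaves the real product $\prod_{j<k}f_{jk}$ untouched. On the other hand, the exponential part of $T_{\sigma(I)}$ is this same product, while its $f$-product, written with the ``smaller index first'' convention, is $\prod_{j<k,\,j,k\in I}f_{\sigma(k)\sigma(j)}$ (note that $\sigma$ reverses the ordering); by the symmetry $f_{kj}=f_{jk}$ and \eqref{Bsymm} this equals $\prod_{j<k}f_{jk}$. Hence $\overline{T_I}=T_{\sigma(I)}$. The promised \emph{explicit} real sum formula is then obtained by pairing each $I$ with $\sigma(I)$: the $\sigma$-invariant subsets contribute self-conjugate, hence real, terms, while each genuine pair $\{I,\sigma(I)\}$ contributes $T_I+\overline{T_I}=2\,\mathrm{Re}\,T_I$; organizing \eqref{taupm} in this way (each singly occupied pair $P_j=\{j,\sigma(j)\}$ producing a factor $2\cos((2n-1)c)$, as in the $N=1$ case) makes reality manifest term by term.

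The only genuinely delicate point is the bookkeeping in the previous step: because $\sigma$ is order-reversing, one must track carefully how the ordered-pair product $\prod_{j<k}f_{jk}$ transforms under $I\mapsto\sigma(I)$, and it is here that one needs \emph{simultaneously} the symmetry $f_{jk}=f_{kj}$ and the reversal identity \eqref{Bsymm}. Everything else --- the reality of the $f_{jk}$ and the conjugation behaviour of the exponentials --- is a direct reading of the formulas. Finally, for the special value $c=\pi/3$ the $3$-periodicity of Proposition~\ref{prop:periodic} and the reduction chain of Propositions~\ref{prop:cauchy}--\ref{prop:periodic} identify $\ln(\tau_2/\tau_1)$ as a real $N$-soliton solution of the Tzitzeica equation.
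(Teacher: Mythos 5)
Your proof is correct and follows essentially the same route as the paper: both implement complex conjugation via the reversal permutation $r(j)=M-j+1$, using \eqref{Bsymm} together with the symmetry and reality of the $f_{jk}$ and the conjugation relation $\overline{\exp(\xi_{j,n})}=\exp(\xi_{M-j+1,n})$ to get $\overline{T_S}=T_{r(S)}$, and then pair each subset with its reversal. The only cosmetic difference is that you work with the raw summand of \eqref{taupm} while the paper first rewrites each term in the simplified form \eqref{TS} with $F_S(\theta)=\prod_{j\in S,k\notin S}f_{jk}^{-1/2}$; the pairing argument is identical.
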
                          
\begin{proof}   
It remains to prove that $\tau_n$ is real. It is clear from (\ref{taupm}) that $\tau_n$ is equal to a sum of terms $T_S$ associated with all subsets $S$ of $\{1,\dots,M\}$. 
To obtain an explicit expression for $T_S$,
we write the subsets as
\beq\label{defS}
S=\{ i_1,\ldots,i_s,M-j_b+1,\ldots,M-j_1+1\},
\eeq
where
\beq\label{ic}
1\le i_1<\cdots <i_s\le N,\ \ 1\le j_1<\cdots <j_b\le N,\ \ s,b\in \{ 0,1,\ldots,N\}.
\eeq
Then we deduce from \eqref{taupm} that the contribution to $\tau_n$ corresponding to $S$ is given by
\beq\label{TS}
T_S=\exp\left((b-s)(2n-1)ic+\sum_{\sigma=1}^s q_{i_{\sigma}}(u,v)/2
+\sum_{\beta=1}^b q_{j_{\beta}}(u,v)/2\right)F_S(\theta),
\eeq
where 
\beq\label{FSt}
F_S(\theta)=\prod_{j\in S,k\notin S}f_{jk}^{-1/2}.
\eeq                  

Next, defining the reversal permutation $r$ of $\{1,2,\dots,M\}$ by $r(j)=M-j+1$, it  follows from \eqref{Bsymm} that
\begin{equation}
F_{r(S)}=F_S,	
\end{equation}
and then from \eqref{TS} that  
\begin{equation}
T_{r(S)}=\overline{T_S},	
\end{equation}         
where the over bar denotes the complex conjugate. Hence, if $r(S)$ is equal to $S$ then $T_S$ is real, and otherwise $T_S+T_{r(S)}$ is real. Thus $\tau_n$ can be written as a sum of real terms. 
\end{proof}

To illustrate this result, we write the 1-soliton tau-function in the explicit form described in the above proof. We have
\begin{align}
	&T_\emptyset=1,\quad T_{\{1,2\}}=\exp(q_{1}(u,v))F_{\{1,2\}},\\
	&T_{\{1\}}=\exp\left((1-2n)ic+q_{1}(u,v)/2\right)F_{\{1\}},\\
	&T_{\{2\}}=\exp\left((2n-1)ic+q_{1}(u,v)/2\right)F_{\{2\}},
\end{align}
where (recall \eqref{B1} and $c\in(0,\pi/2)$)

\begin{equation}
F_{\{2\}}=
F_{\{1\}}=f^{-1/2}_{12}=1/\cos c,	\ \ \ 
F_{\{1,2\}}=1.
\end{equation}
Clearly, the terms $T_\emptyset$ and $T_{\{1,2\}}$ are real, and so is the sum
\begin{equation}
T_{\{1\}}+T_{\{2\}}=\frac{2\cos((2n-1)c)}{\cos c}\exp(q_1(u,v)/2).
\end{equation}                                                                         
The 1-soliton tau-function is given by
\begin{equation}\label{eq:pp1}
\tau_n=1+\frac{2\cos((2n-1)c)}{\cos c}e^{q_1(u,v)/2}+e^{q_1(u,v)}.	
\end{equation}         
In particular, for the Tzitzeica case $c=\pi/3$
we have  
\begin{equation}\label{eq:T1}
\tau_0=\tau_1=1+2e^{q_1(u,v)/2}+e^{q_1(u,v)}=(1+e^{q_1(u,v)/2})^2,	
\end{equation}         
and
\begin{equation}\label{eq:T1'}
\tau_2=1-4e^{q_1(u,v)/2}+e^{q_1(u,v)}.	
\end{equation}                   

Finally, we sketch how to obtain the 2-soliton tau-function in a similar way. In this case $M$ equals $4$ and for the 16 subsets  of $\{1,2,3,4\}$ there are 10 real-valued combinations 
\begin{equation}
	\begin{split}
	&T_\emptyset,\quad T_{\{1,4\}},\quad T_{\{2,3\}},\quad T_{\{1,2,3,4\}},\quad T_{\{1\}}+T_{\{4\}},\quad T_{\{2\}}+T_{\{3\}},\\
	&T_{\{1,2\}}+T_{\{3,4\}},\quad
	T_{\{1,3\}}+T_{\{2,4\}},\quad
	T_{\{1,2,3\}}+T_{\{2,3,4\}},\quad
	T_{\{1,2,4\}}+T_{\{1,3,4\}},
	\end{split}                  
\end{equation}
whose sum gives the 2-soliton tau-function. 



\renewcommand{\theequation}{\thesection.\arabic{equation}}

\setcounter{equation}{0}
\section{Tzitzeica solitons vs.~relativistic Calogero-Moser dynamics}

In the Introduction we have already sketched in general terms how the $2N$-dimensional Poincar\'e-invariant subset $\Omega_P$ of the phase space $\langle \Omega,\omega\rangle$ arises. In order to fill in the details of this qualitative picture, we begin by specifying the subspace of $\Omega$ on which the maximal number $N$ of bound states is present, choosing the internal actions at first larger than their minimum, so that there are $N$ angles varying over $\T^N$. Moreover, until further notice we work with $c\in (0,\pi/2)$, since this eases the notation and adds insight on the Tzitzeica case $c=\pi/3$. With this starting point understood, the variables $\hat{q},\hat{\theta}$ in the equations (4.20)--(4.24) of~\cite{aa2} should be specialized as follows.

First, as already mentioned (cf.~(\ref{NN})), the particle and antiparticle numbers $N_{+}$ and $N_{-}$ of~\cite{aa2} must be chosen equal to $2N$ and $N$ (so that the number $N$ used in~\cite{aa2} becomes $3N$), and the parameters of~\cite{aa2} should be specialized as
\beq\label{parc}
\mu=\beta=1,\ \ \ \tau =g/2=c.
\eeq
 Second, the bound state number~$l$ of~\cite{aa2} should be taken equal to $N$. Thus the numbers $k_{+}$ and $k_{-}$ in (4.20)--(4.24) are equal to $N$ and 0, resp., and we obtain $2N$ particle variables
\beq\label{pv}
\hat{q}_1,\ldots,\hat{q}_N\in\R,\ \ \ \hat{\theta}_N<\cdots <\hat{\theta}_1,
\eeq
and $4N$ bound state variables
\beq\label{bs1}
\hat{q}_{N+1},\ldots,\hat{q}_{2N},\hat{\theta}_{N+1},\ldots,\hat{\theta}_{2N}\in\C,\ \ \ \Im (\hat{\theta}_{N+j})\in (0,c),\ \ \ \hat{\theta}_{N+j}\ne \hat{\theta}_{N+k},
\eeq\label{bs2}
\beq\label{cc}
\hat{q}_{2N+j}=\overline{\hat{q}}_{N+j}+\frac{i\pi}{2}(1-(-)^N),\ \ \ \hat{\theta}_{2N+j}=\overline{\hat{\theta}}_{N+j},
\eeq
where $j,k=1,\ldots,N$ and $k\ne j$.

One advantage of these variables is that they give rise to compact expressions for the symmetric functions of the dual Lax matrix $A$. Specifically, they are given by
\beq\label{Sl}
S_l(A)=\sum_{|I|=l}\exp\left(\sum_{i\in I}\hat{q}_i\right)p_I(\hat{\theta}),\ \ \ \ l=0,\ldots,3N,
\eeq
with
\beq\label{pI}
p_I(\hat{\theta})=\prod_{i\in I,j\notin I}\frac{[\sinh^2((\hat{\theta}_i-\hat{\theta}_j)/2)+\sin^2c]^{1/2}}{\sinh((\hat{\theta}_{\min (i,j)}-\hat{\theta}_{\max (i,j)})/2)},
\eeq
cf.~(5.68) in~\cite{aa2} with parameters (\ref{parc}).

The action-angle variables are now given by
\beq\label{xjs}
x_j^s=\hat{q}_j,\ \ \ p_j^s=\hat{\theta}_j,
\eeq
\beq\label{xj}
x_j=\Re(\hat{q}_{N+j}),\ \ \ p_j=2\Re(\hat{\theta}_{N+j}),
\eeq
\beq
\gamma_j=\Im(\hat{q}_{N+j})+(N+j-1)\pi,\ \ \ \de_j=-2\Im(\hat{\theta}_{N+j}),
\eeq
where $j=1,\ldots,N$ and $\gamma_j\in (-\pi,\pi]$ (mod $2\pi$), cf.~(4.26) in~\cite{aa2}. To handle the limits $\de_j\downarrow -2c$, for which the $\gamma_j$-torus reduces to a  point, we should switch from $\gamma_j,\de_j$ to the `harmonic oscillator variables' $u_j,v_j$, the minimum $\de_j=-2c$ corresponding to the origin $u_j=v_j=0$, cf.~Chapter~5 in~\cite{aa2}. We shall presently analyze the behavior of $S_l(A)$ under this limit. Before embarking on this rather technical issue, however, we clarify how the limit gives rise to a $2N$-dimensional submanifold that is invariant under the Poincar\'e (inhomogeneous Lorentz) group. 

To this end we use the explicit description of the commuting flows generated by Hamiltonians of the form
\beq\label{Hf}
H_h={\rm Tr}\, h(\ln L),
\eeq
with $h(z)$ an arbitrary entire function (cf.~(6.79) in~\cite{aa2}), which follows from (6.6) in~\cite{aa2}. It reads
\bea\label{flows}
\exp(tH_h)(x^s,p^s;x,p,0,0)  &  =   &  (x_1^s+th'(p_1^s),\ldots,x_N^s+th'(p_N^s),p^s;
x_1+t\Re(h'(p_1/2-ic)),
\nonumber\\
  &  & \ldots,x_N+t\Re(h'(p_N/2-ic)),p,0,0).
\eea
The time and space translation generators $H$ and $P$ (cf.~(\ref{Poi})), for which $h(z)$ equals $M_0\cosh(z)$ and $M_0\sinh(z)$, resp., reduce to
\beq
H=M_0\sum_{j=1}^N(\coshq (p_j^s)+2\cos(c)\coshq (p_j/2)),\ \ \ 
P=M_0\sum_{j=1}^N(\sinhq (p_j^s)+2\cos(c)\sinhq (p_j/2)),
\eeq
on the submanifold at issue, yielding flows
\begin{align}\label{Hfl}                      
e^{tH/M_0}(x^s,p^s;x,p,0,0)  &  =   (x_1^s+t\sinhq (p_1^s),\ldots,x_N^s+t\sinhq (p_N^s),p^s;
x_1+t\cos(c)\sinhq (p_1/2),\nonumber\\
  \qquad &  \ldots,x_N+t\cos(c)\sinhq (p_N/2),p,0,0),\\
\label{Pfl}
e^{-yP/M_0}(x^s,p^s;x,p,0,0)  &  =   (x_1^s-y\coshq (p_1^s),\ldots,x_N^s-y\coshq (p_N^s),p^s;
x_1-y\cos(c)\coshq (p_1/2),\nonumber\\
  \qquad & \ldots,x_N-y\cos(c)\coshq (p_N/2),p,0,0).
\end{align}
The boost generator is given by (cf.~(\ref{Poi}))
\beq
B=\sum_{j=1}^N(x_j^s+2x_j),
\eeq
whence we have
\beq
e^{\alpha B}(x^s,p^s;x,p,0,0)=(x^s,p_1^s+\alpha,\ldots,p_N^s+\alpha;x,p_1+2\alpha,\ldots,p_N+2\alpha,0,0).
\eeq

From these formulae it is obvious that the $2N$-dimensional submanifold $\Omega_P$ obtained by requiring
\beq\label{xpr}
x_j=\cos(c)x_j^s,\ \ \ p_j=2p_j^s,\ \ \ j=1,\ldots,N,
\eeq
is Poincar\'e invariant. On $\Omega_P$ we have
\beq
H=p(c)M_0\sum_{j=1}^N\coshq p_j^s,\ \ \ P=p(c)M_0\sum_{j=1}^N\sinhq p_j^s,\ \ \ B=p(c)\sum_{j=1}^Nx_j^s,
\eeq
where the prefactor $p(c)$ is given by
\beq\label{pc}
p(c)=1+2\cos(c).
\eeq
Also, the symplectic form $\omega$ reduces to
\beq
\omega=p(c)\sum_{j=1}^N dx_j^s\wedge dp_j^s.
\eeq
Thus, when we reparametrize $\Omega_P$ with variables
\beq\label{qth}
q_j=p(c)x_j^s,\ \ \ \theta_j=p_j^s,\ \ \ j=1,\ldots,N,
\eeq
then $q,\theta$ are canonical coordinates. Moreover the space-time flow reads
\beq\label{stf}
e^{tH-yP}(q,\theta)=(q_1+p(c)M_0(t\sinhq \theta_1-y\coshq \theta_1),\ldots,
q_N+p(c)M_0(t\sinhq \theta_N-y\coshq \theta_N),\theta),
\eeq
or, equivalently,
\beq
e^{uS_{-}-vS_{+}}(q,\theta)=(q_1(u,v),\ldots,
q_N(u,v),\theta),
\eeq
where
\beq\label{quv2}
q_j(u,v)=q_j-p(c)M_0(ue^{-\theta_j}+ve^{\theta_j}),\ \ \ \ j=1,\ldots,N.
\eeq
With (\ref{Tspec}) in force, this coincides with the Tzitzeica soliton space-time dependence  (\ref{quv}). Of course, this state of affairs is still far from a proof that the tau-function defined at the end of Section~3 and the tau-function (\ref{tauL}) evaluated on $\Omega_P$ are equal.

In order to demonstrate this equality, we return to (\ref{Sl})--(\ref{pI}) and the variables $\hat{q},\hat{\theta}$. On $\Omega_P$ the latter specialize as
\beq\label{qhq}
\hat{q}_j=q_j/p(c),\ \ \ \Re(\hat{q}_{N+j})=\cos(c)q_j/p(c),\ \ \ j=1,\ldots,N,
\eeq
\beq\label{qhq0}  
\hat{\theta}_j=\theta_j,\ \ \ \hat{\theta}_{N+j}=\theta_j+ic,\ \ \ j=1,\ldots,N,
\eeq
and (\ref{cc}) still holds. More precisely, we have
\beq\label{qhq1}
\hat{\theta}_{2N+j}=\theta_j-ic,\ \ \ j=1,\ldots,N,
\eeq
but we can only infer
\beq\label{qs}
\hat{q}_{N+j}+\hat{q}_{2N+j}=2\cos(c)q_j/p(c)+\frac{i\pi}{2}(1-(-)^N),
\eeq
since the imaginary parts are undetermined in the collapsing torus limit.

We proceed to analyze this limit for (\ref{Sl})--(\ref{pI}). First, whenever $I$ contains the index $N+j$, but not the index $2N+j$, with $j=1,\ldots,N$, or vice versa, then it is clear from (\ref{pI}) that the limit of $p_I(\hat{\theta})$ vanishes. Thus we need only consider $I$ that either contain both indices or neither. As a consequence, we only encounter $\hat{q}_{N+j}$ and $\hat{q}_{2N+j}$ in the combination (\ref{qs}), which confirms that the limit is well defined.

Introducing the breather sets
\beq
B_j=\{ N+j,2N+j\},\ \ \ j=1,\ldots,N,
\eeq
we proceed to simplify the limit of $p_I(\hat{\theta})^2$ for $I$ of the form
\beq\label{If}
I=\{ i_1,\ldots, i_s\} \cup B_{j_1}\cup \cdots\cup B_{j_b},
\eeq
with (\ref{ic}) in effect. This boils down to a quite special case of the fusion procedure detailed on pp.~237--238 of~\cite{KP}. Thus we put
\beq\label{Jdef}
J=\{ i_1,\ldots, i_s,j_1+N,\ldots, j_b+N\} \subset \{ 1,\ldots, 2N\},
\eeq
and
\beq\label{eq:eta}
\eta_j=\theta_j,\ \ \eta_{j+N}=\theta_j,\ \ \ c_j=c,\ \ \ c_{j+N}=2c,\ \ \ j=1,\ldots,N,
\eeq
to obtain
\beq\label{limI}
\lim p_I(\hat{\theta})^2=\prod_{j\in J,k\notin J}\frac{\sinh^2((\eta_j-\eta_k)/2)+\sin^2((c_j+c_k)/2)}
{\sinh^2((\eta_j-\eta_k)/2)+\sin^2((c_j-c_k)/2)},
\eeq
cf.~(2.27) in~\cite{KP}. To render this step in our reasoning self-contained, we present a proof of (\ref{limI}) in Appendix~A.

Requiring (\ref{Tspec}) from now on, the following two cases arise for the product factor $p_{jk}$ on the rhs of (\ref{limI}):
\beq\label{p1}
j,k\le N\ {\rm or}\ j,k>N,\ {\rm with} \ j\ne k \Rightarrow p_{jk}=\frac{\sinh^2((\theta_j-\theta_k)/2)+3/4}
{\sinh^2((\theta_j-\theta_k)/2)},
\eeq
\beq\label{p2}
j\le N,k>N\ {\rm or}\ j>N,k\le N \Rightarrow p_{jk}=\frac{\sinh^2((\theta_j-\theta_k)/2)+1}
{\sinh^2((\theta_j-\theta_k)/2)+1/4}.
\eeq
Taking positive square roots of the $p_{jk}$, we deduce from (\ref{limI}) that we have
\beq\label{plim}
\lim p_I(\hat{\theta})=s_J \prod_{j\in J,k\notin J}p_{jk}^{1/2},
\eeq
for a certain sign $s_J$. We shall determine this sign later on, cf.~(\ref{sJ}).

Next, we focus on the principal minor expansion of the tau-function (\ref{tauL}), restricted to the submanifold of $\Omega$ with $N$ bound states present, with internal actions $\de_1,\ldots,\de_N$ in $(-2\pi/3,0)$. It reads
\bea
\tau_n(u,v)  &  =  &  \sum_{l=0}^{3N}\exp(i\pi l(1-2n)/3)S_l(A(x(u,v)))
\nonumber\\
  &  =  &   \sum_{l=0}^{3N}\exp(i\pi l(1-2n)/3)\sum_{|I|=l}\exp(\sum_{i\in I}\hat{q}_i(u,v))p_I(\hat{\theta}).
  \eea
Taking the actions to their minima $-2\pi/3$, we deduce from the above that the limiting tau-function is a sum of nonzero contributions $C_J$ for all $J$ of the form (\ref{Jdef}), with
\bea\label{CJ}
C_J  &  =  &  \exp(i\pi (s+2b)(1-2n)/3)\prod_{\sigma=1}^s \exp(q_{i_{\sigma}}(u,v)/2)
\nonumber\\
  &  &  \times \prod_{\beta=1}^b \exp\left(q_{j_{\beta}}(u,v)/2+\frac{i\pi}{2}\left(1-(-)^N\right)\right)\cdot
  s_J\prod_{j\in J,k\notin J}p_{jk}^{1/2}.
\eea

We are now prepared to state and prove the principal result of this paper.

\begin{theor} Let $c=\pi/3$. Then the relativistic Calogero-Moser tau-function (\ref{tauL}) restricted to the subspace $\Omega_P\simeq \cP$ is equal to the Tzitzeica tau-function (\ref{taupm}) with the factors given by (\ref{B1})--(\ref{Bsymm}) and (\ref{Fj})--(\ref{quv}) and with parameters in $\cP$.
\end{theor}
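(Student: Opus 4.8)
The plan is to prove the identity contribution by contribution, matching the two principal-minor expansions under a natural bijection. On the Calogero-Moser side the collapsing-torus limit produces, by \eqref{CJ}, one contribution $C_J$ for every index set $J=\{i_1,\ldots,i_s,j_1+N,\ldots,j_b+N\}\subset\{1,\ldots,2N\}$ of the form \eqref{Jdef}, while on the Tzitzeica side \eqref{taupm} produces, by \eqref{TS}, one term $T_S$ for every subset $S=\{i_1,\ldots,i_s,M-j_b+1,\ldots,M-j_1+1\}$ of $\{1,\ldots,M\}$ of the form \eqref{defS}, with $M=2N$. Since both families are indexed by the same data $(\{i_\sigma\},\{j_\beta\})$ subject to \eqref{ic}, I would set up the bijection $J\leftrightarrow S$ that fixes the particle indices $i_1,\ldots,i_s$ and sends $j_\beta+N\mapsto M-j_\beta+1$; equivalently $S=\pi(J)$, where $\pi$ is the identity on $\{1,\ldots,N\}$ and the reversal on $\{N+1,\ldots,2N\}$. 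It then suffices to prove $C_J=T_{\pi(J)}$ for every $J$.

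The space-time-dependent exponentials match at once: both $C_J$ and $T_{\pi(J)}$ carry the factor $\exp(\tfrac12\sum_\sigma q_{i_\sigma}(u,v)+\tfrac12\sum_\beta q_{j_\beta}(u,v))$, and by \eqref{quv}, \eqref{quv2} and \eqref{Tspec} the functions $q_j(u,v)$ are literally the same on $\cP$ and on $\Omega_P$. Next I would match the $\theta$-dependent magnitudes. Using $\sin^2(\pi/3)=3/4$ together with $\cosh^2=1+\sinh^2$, a short computation rewrites the two entries of \eqref{B1} as $f_{jk}^{-1}=p_{jk}$ for $j,k\le N$ and $f_{j,M-k+1}^{-1}=p_{j,k+N}$, with $p_{jk}$ the factors \eqref{p1}--\eqref{p2}; the symmetry \eqref{Bsymm} and $f_{jk}=f_{kj}$ then supply the analogous identities for the remaining two types of boundary pairs. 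Since $\pi$ carries the boundary (``cut'') pairs of $S$ bijectively onto those of $J$ and respects these four identities case by case, I obtain $F_S(\theta)^2=\prod_{j\in J,k\notin J}p_{jk}=\lim p_I(\hat\theta)^2$ by \eqref{limI}; as the $f_{jk}$ (hence $F_S$ of \eqref{FSt}) and the $p_{jk}$ are positive, taking positive roots yields $F_S(\theta)=\prod_{j\in J,k\notin J}p_{jk}^{1/2}$.

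It remains to reconcile the constant phase factors. With $c=\pi/3$ the Tzitzeica phase in \eqref{TS} equals $\exp((s-b)(1-2n)i\pi/3)$, whereas the Calogero-Moser phase in \eqref{CJ} equals $\exp((s+2b)(1-2n)i\pi/3)$; their ratio is $\exp(b(1-2n)i\pi)=(-1)^b$, since $1-2n$ is odd. Combining this with the extra breather factors $\exp(\tfrac{i\pi}{2}(1-(-)^N))$ appearing in \eqref{CJ} and with the fusion sign $s_J$ of \eqref{plim}, and using $F_S=\prod p_{jk}^{1/2}$, the desired identity $C_J=T_{\pi(J)}$ collapses to the single requirement $s_J\,(-1)^b\,\exp(\tfrac{ib\pi}{2}(1-(-)^N))=1$, i.e. $s_J=(-1)^{b(N+1)}$. (For $N=1$ this demands $s_J=1$ for all $J$, which one can confirm directly against the explicit 1-soliton formulas \eqref{eq:T1}--\eqref{eq:T1'}.)

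The main obstacle is precisely the determination of the sign $s_J$ promised in \eqref{sJ}; everything else is bookkeeping. This sign is not a magnitude computation: it must be extracted from \eqref{pI} in the collapsing-torus limit, where the bound-state rapidities become the complex-conjugate pairs $\hat\theta_{N+j}=\theta_j+ic$, $\hat\theta_{2N+j}=\theta_j-ic$, so that many of the numerator radicands $\sinh^2((\hat\theta_i-\hat\theta_j)/2)+\sin^2c$ and many of the ordering-dependent denominators $\sinh((\hat\theta_{\min(i,j)}-\hat\theta_{\max(i,j)})/2)$ are genuinely complex. I would control $s_J$ by exploiting the ``both-or-neither'' rule established before \eqref{If}: a breather index $N+j$ in $I$ always occurs together with its conjugate partner $2N+j$, so that for each external index the two radical factors attached to the couple $\{N+j,2N+j\}$ are complex conjugates and their product reduces to a positive modulus. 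The residual sign then resides entirely in the denominators, whose signs are pinned down by the ordering convention $\hat\theta_N<\cdots<\hat\theta_1$ for the particles and by the choice of branch for the square roots. A careful count of these signs, which I expect to depend only on $b$ and on the parity of $N$ through the term $\tfrac{i\pi}{2}(1-(-)^N)$ already visible in \eqref{cc}, should yield exactly $s_J=(-1)^{b(N+1)}$ and thereby close the argument.
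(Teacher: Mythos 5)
Your reduction of the theorem to a single sign identity is correct and follows essentially the same route as the paper: the bijection between the index sets $S$ of \eqref{defS} and $J$ of \eqref{Jdef}, the matching of the $q_j(u,v)$-exponentials via \eqref{quv} and \eqref{quv2}, the identities $f_{jk}^{-1}=p_{jk}$ yielding $F_S(\theta)=\prod_{j\in J,k\notin J}p_{jk}^{1/2}$, and the phase bookkeeping reducing everything to \eqref{sid} all coincide with the paper's argument. Your target value $s_J=(-1)^{b(N+1)}$ is also consistent with the paper's answer \eqref{sJ}, $s_J=(-)^{b(N-b)}$, since $b(N+1)-b(N-b)=b(1+b)$ is always even.

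However, the step you yourself call ``the main obstacle'' --- the actual determination of $s_J$ --- is left as a plan (``a careful count \dots should yield'') rather than carried out, and that is where the substantive content of the proof lies. The paper settles it thus: since the numerator of $p_I(\hat\theta)$ in \eqref{pI} is positive (each radicand is positive or matched by its complex conjugate, as you observe), $s_J$ is the sign of $\Pi_I=\prod_{i\in I,j\notin I}s_{ij}$, $s_{ij}=\sinh((\hat{\theta}_{\min(i,j)}-\hat{\theta}_{\max(i,j)})/2)$, evaluated at \eqref{thsp}; reality of $\Pi_I$ follows from \eqref{thp}. The decisive case analysis is then: (a) any soliton index $i_\sigma\in I$ paired with any $j\notin I$ contributes positively, by the ordering $\theta_N<\cdots<\theta_1$ for real pairs and by \eqref{thp} with $\psi=\pi/3$ for breather pairs; (b) a breather $B_{j_\beta}\subset I$ against a soliton $i\notin I$ is likewise positive; but (c) a breather $B_{j_\beta}\subset I$ against a breather $B_k\not\subset I$ produces the four-factor product $s_{j_\beta+N,k+N}\,s_{j_\beta+N,k+2N}\,s_{j_\beta+2N,k+N}\,s_{j_\beta+2N,k+2N}$, which is \emph{negative} for both orderings $k<j_\beta$ and $k>j_\beta$. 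Counting the $b(N-b)$ external breather--breather pairs gives $s_J=(-)^{b(N-b)}$ and hence \eqref{sid}. Without discovery (c) --- that minus signs arise only from external breather--breather pairs --- the count cannot be completed; indeed your expectation that the sign depends only on $b$ and the parity of $N$ is something one can only confirm after the count (it holds because $b(N-b)\equiv b(N+1)\pmod 2$), and the proposed $N=1$ check constrains but does not establish the general formula.
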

\begin{proof}
We need only show equality of the general contribution $C_J$ to the general term $T_S$ in (\ref{taupm}), cf.~(\ref{TS}).
First, we compare $f_{jk}$, given by (\ref{B1})--(\ref{Bsymm}) with $c=\pi/3$, to $p_{jk}$ given by (\ref{p1})--(\ref{p2}). From this we easily deduce
\beq
F_S(\theta)=\prod_{j\in J,k\notin J}p_{jk}^{1/2}.
\eeq
Next, we note that the factors involving $q_j(u,v)$ are in agreement. Hence the asserted equality comes down to an equality of the remaining numerical factors. Specifically, it remains to show
\beq\label{sid}
(-)^b\exp\left(\frac{i\pi b}{2}\left(1-(-)^N\right)\right)s_J=1.
\eeq

To this end we now calculate the sign $s_J$ in (\ref{plim}). We begin by noting that for $\hat{\theta}\in \C^{3N}$ given by (\ref{pv})--(\ref{bs2}), the product $p_I(\hat{\theta})$ (\ref{pI}) has a positive numerator. (Indeed, each radicand is either positive or has a nonzero imaginary part; in the latter case it is matched by a factor with the complex-conjugate radicand.) We are therefore reduced to analyzing the phase of
\beq
\Pi_I=\prod_{i\in I,j\notin I}s_{ij},\ \ \ s_{ij}=\sinh((\hat{\theta}_{\min(i,j)}-\hat{\theta}_{\max(i,j)})/2),
\eeq
for the special $\hat{\theta}$ under consideration, namely,
\beq\label{thsp}
\hat{\theta}=(\theta_1,\ldots,\theta_N,\theta_1+i\pi/3,\ldots,\theta_N+i\pi/3,\theta_1-i\pi/3,\ldots,\theta_N-i\pi/3),\ \ \ \ \theta_N<\cdots<\theta_1.
\eeq
We already know from (\ref{plim}) that this phase is just the sign $s_J$. Indeed, we have
\beq\label{thp}
\sinh((\theta_j-\theta_k-i\psi)/2)\sinh((\theta_j-\theta_k+i\psi)/2)>0,\ \ \ \psi=\pi/3,2\pi/3,
\eeq
which confirms that $\Pi_I$ is either positive or negative.

We continue to analyze the contributions to $s_J$ from indices in $I$ (\ref{If}), taking (\ref{thsp}) into account. First, we consider the contribution of $i_{\sigma}\in I$. Due to the ordering of the $\theta_j$ in (\ref{thsp}), any $j\notin I$ with $j\le N$ yields $s_{i_{\sigma}j}>0$. Also, any $B_k$ that is not a subset of $ I$ yields a contribution of the form (\ref{thp}) with $\psi=\pi/3$. Hence all indices $i_1,\ldots,i_s$ in $I$ yield positive signs.

Next, we study the contribution of $B_{j_{\beta}}\subset I$. For $i\in I$ with $i\le N$ it again follows from (\ref{thp}) that we get a positive sign. Now suppose $B_k$ is not a subset of $I$ and consider the pertinent product 
\beq
s_{j_{\beta}+N,k+N}s_{j_{\beta}+N,k+M}s_{j_{\beta}+M,k+N}s_{j_{\beta}+M,k+M}.
\eeq
Both for $k<j_{\beta}$ and for $k>j_{\beta}$ this product is easily seen to be negative. Therefore any pair $B_j,B_k$ with $B_j$ included in $I$ and $B_k$ not included in $I$ gives rise to a minus sign in $\Pi_I$. Now $I$ contains $b$ breather index sets, so there are $N-b$ breather sets not contained in $I$. Thus we finally deduce
\beq\label{sJ}
s_J=(-)^{b(N-b)}.
\eeq

With this explicit formula in hand, it is routine to verify (\ref{sid}). Consequently, our proof of the asserted tau-function equality is now complete.
\end{proof}


\renewcommand{\theequation}{\thesection.\arabic{equation}}

\setcounter{equation}{0}

\section{The $N=1$ case}
In this section we consider various aspects of the special case $N=1$, which yields an illuminating illustration of the above constructions and equalities. More specifically, we focus on features of the Tzitzeica 1-soliton solution and the space $\Omega_P$ for $c\in(0,\pi/2)$.

For $N=1$ it follows from (\ref{eq:T1})--(\ref{eq:T1'})  that we have
\beq\label{1sol}
\tau_0=\tau_1=1+2F+F^2,\ \ \tau_2=1-4F+F^2,\ \ \ F=\exp(q(u,v)/2),
\eeq
with
\beq
q(u,v)=q-2\sqrt{3}(ue^{-\theta}+ve^{\theta})=q+2\sqrt{3}(t\sinh(\theta)-y\cosh(\theta)),
\eeq
cf.~(\ref{ty}). Thus the two $\tau_2$-zeros for $F=2\pm \sqrt{3}$ yield two parallel space-time lines
\beq
y_{\pm}(t)=t\tanh(\theta)+\frac{1}{2\sqrt{3}\cosh(\theta)}(q-2\ln (2\pm \sqrt{3})),
\eeq
where $\Psi$ diverges, cf.~Fig.~1. 
In this figure and in Fig.~2, we plot for clarity $-\Psi$ rather than $\Psi$, and truncate $-\Psi$ at a finite cutoff value. The regions between the singularities, where $\tau_2$ is negative so that $\Psi=\ln (\tau_2/\tau_0)$ takes complex values, are approximated by the `plateaux' in these figures. 
  
\begin{figure}[h]
	\centering
	\includegraphics*[width=6in]{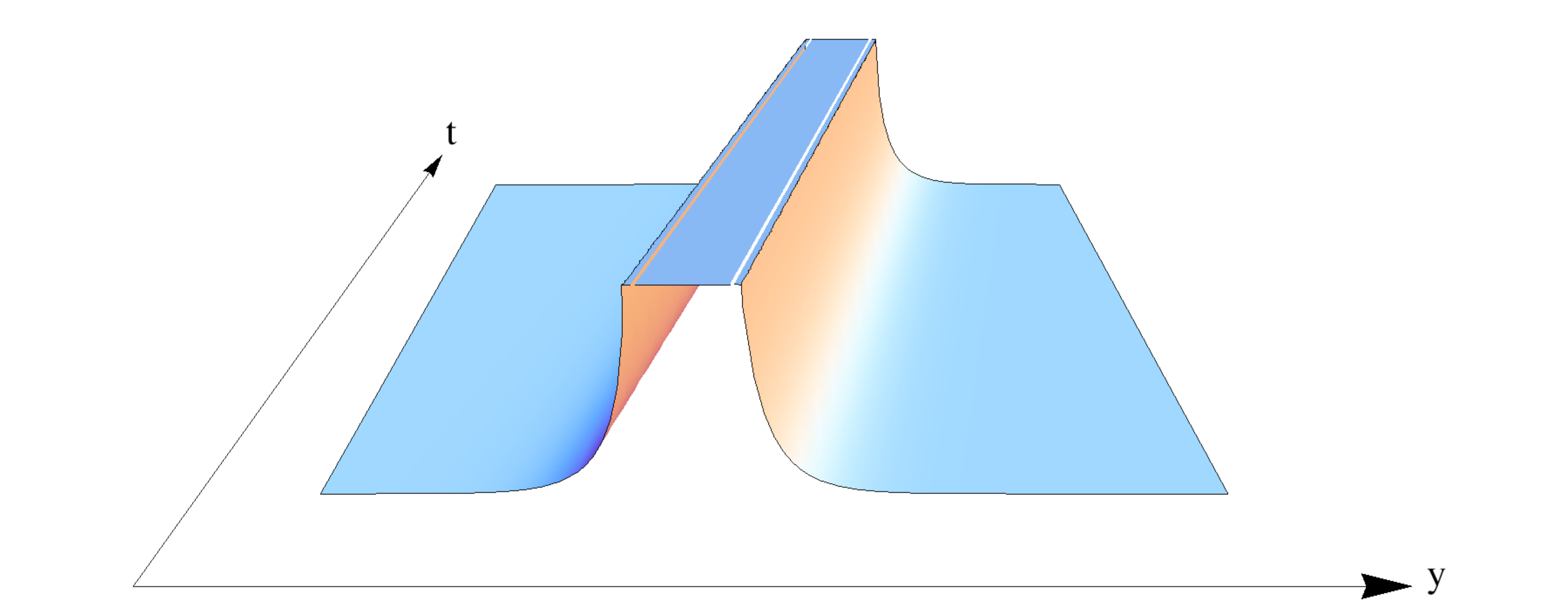}
	\caption{One-soliton solution. $-\Psi(t,y)$ is shown, the $y$-axis points left to right and the $t$-axis points into the page. The `plateau' approximates the region in which $\Psi(t,y)$ is complex-valued.}
\end{figure}
\
Recalling (\ref{tauL}), we see that the $\tau_2$-zeros correspond to $x_1^{+}(t,y)$ or $x_2^{+}(t,y)$ being zero. The space-time lines may therefore be viewed as the trajectories of the two particles in the 1-soliton cluster.

For  $N=1$  it might seem an easy matter to locate the 2-dimensional space $\Omega_P$ within the 6-dimensional phase space $\Omega$. In fact, however, it is not even easy to find the 1-dimensional submanifold of $\Omega_P$ consisting of $H$-equilibrium points, i.e.,
\beq
\Omega_E=\{ (q,\theta)\in\Omega_P\simeq \R^2\mid \theta =0\},
\eeq
in explicit form. (Note that for $c=\pi/3$ this yields the stationary 1-soliton tau-functions.) Of course, it is immediate from (\ref{Poi})--(\ref{V-}) that we need
\beq\label{peq}
p_1^{+}=p_2^{+}=p_1^{-}=0,
\eeq
for $H$ to have an equilibrium. Requiring this,
one expects from physical considerations to get an equilibrium point for
\beq\label{xeq}
x_1^{+}=-x_2^{+}=d>0,\ \ \ x_1^{-}=0,
\eeq
and a suitable $d$. We now confirm this for $c\in (0,\pi/2)$. Then we have from (\ref{Poi})--(\ref{V-})
\beq\label{Hequi}
\frac{H(d,-d,0,0,0,0)}{M_0}=2(f^{+}f^{-})^{1/2}+f^{-},\ \ \ f^{+}=
1+\frac{\sin^2c}{\sinh^2d},\ \ f^{-}=1-\frac{\sin^2c}{\cosh^2(d/2)}.
\eeq
Thus we have $H\to 3M_0$ for $d\to\infty$ and $H\to \infty$ for $d\to 0$. Since $H$ is equal to $M_0\sum_{j=1}^3 \cosh(\hat{\theta_j})$, it has an absolute minimum $M_0p(c)$ for $\hat{\theta}= (0,ic,-ic)$. Now we can still choose $x_1^s,x_1\in\R$ (cf.~(\ref{xjs})--(\ref{xj})), so $H$ has  a 2-parameter family of stable equilibria. It is straightforward to check that for $d$ equal to
\beq\label{de}
d_e= \cosh^{-1}(1+\cos c),
\eeq
we have
\beq
f^{+}=\frac{1+2\cos c}{(2+\cos c)\cos c},\ \ \ f^{-}=\frac{(1+2\cos c)\cos c}{2+\cos c}.
\eeq
Hence
the rhs of 
(\ref{Hequi}) equals $M_0p(c)$, so we do get a stable equilibrium for $d=d_e$. This implies that all of the points
\beq\label{eq1}
E(\sigma)=(d_e+\sigma, -d_e+\sigma, \sigma,0,0,0),\ \ \ \ \sigma\in\R,
\eeq
are also equilibria.

Since $H$ equals $M_0p(c)$ on $\Omega_E$, one might expect that the 1-parameter family of equilibria $E(\sigma)$ yields $\Omega_E$. In fact, however, only $E(0)$ belongs to $\Omega_E$. To be specific, we assert that it corresponds to $q=0$. To show this, we use (\ref{Sl}) with $N=1$ to find the symmetric functions of $A$ on $\Omega_E$. We determined the relevant limits below (\ref{qs}). In particular, (\ref{limI}) yields $p(c)^2$ for $I$ equal to $\{ 1\}$ or $\{ 2,3\}$, since $\theta_1=0$ on $\Omega_E$. Using also (\ref{qhq}) and (\ref{qs}), we readily obtain
\beq\label{S1}
S_1(A)=p(c)\exp(q/p(c)),
\eeq
\beq\label{S2}
S_2(A)=-p(c)\exp(2\cos(c)q/p(c)),
\eeq
\beq\label{S3}
S_3(A)=-\exp(q).
\eeq
For $q=0$ this implies that the spectrum of $A$ is given by
\beq
\sigma(A)=\{ 1+\cos(c)\pm (\cos^2(c)+2\cos(c))^{1/2},-1\}.
\eeq
The first two eigenvalues can be written as $\exp(\pm d_e)$, cf.~(\ref{de}). Thus the origin of $\Omega_E\simeq \R$ corresponds to generalized positions $x_1^{+}=d_e,x_2^{+}=-d_e, x_1^{-}=0$. Since $\Omega_E$ consists of $H$-equilibria, we also have $p_1^{+}=p_2^{+}=p_1^{-}=0$.  Hence we deduce $E(0)\in\Omega_E$ and $q=0$, as asserted.

On the other hand, the equilibrium points in $\Omega_P$ with $q\ne 0$ are harder to find explicitly. As already announced, they do not include the equilibria $E(\sigma)$ for $\sigma\ne 0$. (Indeed, $E(0)$ does belong to $\Omega_E$, as just shown. Now the $\sigma$-shift of the generalized positions corresponds to a shift of $x_1^s$ and $x_1$ by $\sigma$, so that the $\Omega_P$-condition  $x_1=\cos (c)x_1^s$ no longer holds true for $\sigma\ne 0$.) Rather, they are obtained by acting with the commuting flow $\exp(yP/M_0), y\in\R$, on the equilibrium
$E(0)$, yielding translated equilibria
\beq
T(y)=(x_1^{+}(y),x_2^{+}(y),x_1^{-}(y),0,0,0),\ \ \ y\in\R.
\eeq
In view of (\ref{S3}) and (\ref{stf}) this yields the sum rule
\beq
x_1^{+}(y)+x_2^{+}(y)+x_1^{-}(y)=p(c)y.
\eeq
Also, from Hamilton's equations for $P/M_0$ we have 
\beq
x_j^{+}(y)'>\cos(c),\ \ j=1,2,\ \ x_1^{-}(y)'\ge\cos^2(c),
\eeq
so that $x_1^{+}(y), x_2^{+}(y)$ and $x_1^{-}(y)$ are strictly increasing functions of $y$. 

Next, we deduce from (\ref{S1})--(\ref{S3}) that we have the reflection symmetry
\beq
x_1^{+}(-y)=-x_2^{+}(y), \ \ \ x_1^{-}(-y)=-x_1^{-}(y),
\eeq
so it remains to determine the functions for $y>0$. This is presumably possible, but we have not pursued this. However, the large-$y$ asymptotics can be established from the above. Specifically, setting
\beq
\epsilon =\exp (\cos c -1)y,
\eeq
we obtain for $\epsilon\to 0$
\beq
x_1^{+}(y)=y+\ln p(c) +O(\epsilon^2),
\eeq
\beq
x_2^{+}(y)= \cos(c) y-\frac{1}{2} \ln p(c) -\frac{1}{2}\sqrt{p(c)}(1-p(c)^{-2})\epsilon +O(\epsilon^2),
\eeq
\beq
x_1^{-}(y)= \cos(c) y-\frac{1}{2} \ln p(c) +\frac{1}{2}\sqrt{p(c)}(1-p(c)^{-2})\epsilon +O(\epsilon^2).
\eeq

A final observation of interest concerns the kinetic and potential energy density of the stationary 1-soliton solution, i.e., the functions
\beq
E_K(y)=(\partial_y\Psi)^2/2,\ \ \ \ E_P(y)=\exp(\Psi)+\exp(-2\Psi)/2-3/2,\ \ \ \Psi=\ln(\tau_2/\tau_0),
\eeq
obtained from (\ref{1sol}) for $\theta =0$. It is far from obvious, but true that these functions are equal. This equality can be verified directly by a straightforward, but quite tedious calculation.

A more conceptual derivation of this virial type identity will now be given. First, we note that any $t$-independent solution to (\ref{Tz}) satisfies the ODE
\beq\label{ode}
f_{yy}=e^f-e^{-2f}.
\eeq
Now from (\ref{1sol}) we obtain a solution to (\ref{ode}) of the form 
\beq
f(y)=g(\exp((q-2\sqrt{3}y)/2)),\ \ \ \ g(z)=\ln\left(\frac{1-4z+z^2}{1+2z+z^2}\right).
\eeq
Defining
\beq
z_{\pm}=2\pm \sqrt{3},
\eeq
it satisfies
\beq\label{eg+}
e^{g(z)}>0,\ \ \ z>z_{+},\ \ \ z<z_{-},
\eeq
\beq\label{eg-}
e^{g(z)}<0,\ \ \ z\in (z_{-},z_{+}),
\eeq
and
\beq\label{glim}
g(z),g'(z)\to 0,\ \ \ z\to\pm\infty,
\eeq
\beq\label{gsp}
\exp(g(1))=-1/2,\ \ \ g'(1)=0.
\eeq

Next, consider the Hamiltonians
\beq\label{Hpm}
H_{\pm}(x,p)=p^2/2-V_{\pm}(x),\ \ \ V_{\pm}(x)=\pm e^x+e^{-2x}/2-3/2.
\eeq
The potential $-V_{+}(x)$ has a maximum 0 at $x=0$ and yields a Newton equation
\beq
\ddot{x}=e^x-e^{-2x}.
\eeq
Comparing this to (\ref{ode}), (\ref{eg+}) and (\ref{glim}), we see that $g(z)$ for $z>z_{+}$ corresponds to the $E=0$ orbit with $x(t)<0$ coming from 0 for $t\to -\infty$, and $g(z)$ for $z<z_{-}$ to the $E=0$ orbit with $x(t)<0$ going to 0 for $t\to\infty$. By energy conservation, we have $\dot{x}^2/2=V_{+}(x)$, which implies $E_K(y)=E_P(y)$ for the $y$-intervals where $e^{f(y)}>0$.

It remains to show the identity for the $y$-interval where $e^{f(y)}<0$. Then we can compare (\ref{ode}) to the Newton equation
\beq
\ddot{x}=-e^x-e^{-2x},
\eeq
corresponding to $H_{-}$. The $E=0$ orbit stays to the left of the origin and has its turning point at $x=\ln (1/2)$, cf.~(\ref{Hpm}). Comparing this to (\ref{gsp}), we see that it corresponds to the real part of  $g(z)$ for $z\in(z_{-},z_{+})$. Hence $E_K(y)=E_P(y)$ now follows from $\dot{x}^2/2=V_{-}(x)$.


\renewcommand{\theequation}{\thesection.\arabic{equation}}

\setcounter{equation}{0}

\section{The Darboux and Kaptsov-Shanko solitons}

In this section we begin by showing that the solitons obtained by the $B_{\infty}$-reduction from the Kyoto 2D Toda solitons are equal to those obtained by a similar reduction from a seemingly different class of 2D Toda solitons. The latter are constructed via repeated Darboux transformations. As will transpire, this procedure yields a larger class of solutions, involving an arbitrary constant antisymmetric matric $\cC$.   In order to obtain equality to the $B_{\infty}$ solitons of Section~2, this matrix must be suitably specialized.

Our demonstration of equality leads to a new representation of the latter solitons. This representation can be exploited to show that $\tau_0$ equals the square of a simpler tau-function. This is because it readily leads to $\tau_0$ being the determinant of an antisymmetric $M\times M$ matrix $A$. Hence it follows that we have
\beq\label{tsq}
\tau_0=\tau^2,\ \ \ \ \tau= {\rm Pf}(A).
\eeq
The pfaffian can be explicitly evaluated, and when the Tzitzeica substitutions of Sections~2--3 are made in $A$, then the resulting $\tau$ is the one obtained by Kaptsov and Shanko in their study of the Tzitzeica equation~\cite{kash}. 

We proceed with the details. We start from the tau-function $\tau_n$ (\ref{tre}), with $\cJ$, $\cD$ and $\cA$ given by (\ref{cJ}), (\ref{calD}) and (\ref{calA}), and with the quantities $\chi_j$ in $\cD$ arbitrary at this stage. We claim that $\tau_n$ is equal to the determinant $\tilde{\tau}_n$ of the matrix
\beq
\cM_n= \cC_{\cR}+\cB \tilde{\cA}_n\cB,
\eeq
where
\beq\label{calCR}
\cC_{\cR}=  \left( \begin{array}{cc}
0 & \cR_N \\
-\cR_N & 0 
\end{array} \right),
\eeq
\beq
\cB= \diag (\beta_1,\ldots,\beta_M),
\eeq
\beq
\tilde{\cA}_{n,jk}= \frac{(-a_j)^na_k^{-n+1}}{a_j+a_k},\ \ \ \ j,k=1,\ldots,M,
\eeq
provided that $\beta_1,\ldots,\beta_M$ are chosen such that
\beq\label{bcon}
\beta_j\beta_{M-j+1}=\chi_j^2,\ \ \ \ j=1,\ldots,N.
\eeq
(Recall $\cR_N$ denotes the $N\times N$ reversal permutation matrix.)

In order to prove this, we denote the columns of $\cM_n$ by $c_1,\ldots,c_M$, and use
\beq
\tilde{\tau}_n=|\cM_n|=|{\rm Col}(c_1,\ldots,c_M)|=|{\rm Col}(c_M,\ldots,c_{N+1},-c_N,\ldots,-c_1)|.
\eeq
Thus, $\tilde{\tau}_n$ equals the determinant of the matrix
\beq
{\bf 1}_M+\cN_n\cJ,\ \ \ \cN_{n,jk}= \beta_j\beta_{M-k+1}\frac{(-a_j)^n(a_{M-k+1})^{-n+1}}{a_j+a_{M-k+1}}.
\eeq
Transforming this matrix with the similarity matrix
\beq
\cS =\diag (\gamma_1,\ldots,\gamma_M),\ \ \ \gamma_j=(\beta_{M-j+1}/\beta_j)^{1/2},\ \ \ j=1,\ldots,M,
\eeq
we obtain
\beq
\tilde{\tau}_n=|{\bf 1}_M+\tilde{\cD}\cA_n\tilde{\cD}\cJ|,
\eeq
with
\beq
\tilde{\cD}= \diag ((\beta_1\beta_M)^{1/2},(\beta_2\beta_{M-1})^{1/2},\ldots,(\beta_M\beta_1)^{1/2}).
\eeq
Hence the $\beta$-constraint (\ref{bcon}) entails equality of $\tilde{\tau}_n$ and $\tau_n$, as advertised.

We can now compare the new representation
\beq\label{repn}
\tau_n=|\cC_{\cR}+\cD\tilde{\cA}_n\cD|,
\eeq
obtained by choosing
\beq
\beta_j=\beta_{M-j+1}=\chi_j,\ \ \ \ j=1,\ldots,N,
\eeq
(so that (\ref{bcon}) is obeyed, and $\cB$ and $\tilde{\cD}$ both equal $\cD$), with the solitons obtained by a $B_{\infty}$ symmetry reduction from the Darboux type 2D Toda solitons. (See Appendix~C for a sketch of their construction.) The reduced solitons are of the form
\beq\label{tau D}
\tau_n^D=|\cC+\tilde{\cB}\tilde{\cA}_n\tilde{\cB}|,
\eeq
where $\cC$ is an arbitrary antisymmetric $M\times M$ matrix and $\tilde{\cB}$ is a diagonal matrix
\beq
\tilde{\cB}=\diag (\tilde{\beta}_1,\ldots,\tilde{\beta}_M),
\eeq
with diagonal elements of the form
\beq\label{bD}
\tilde{\beta}_j=\alpha_j\exp(-iva_j+iua_j^{-1}),\ \ \ \ j=1,\ldots,M.
\eeq
Recalling the definition (\ref{chi})--(\ref{psi}) of $\chi_j$, we see that (\ref{bcon}) is obeyed when we choose $\alpha_1,\ldots,\alpha_M$ such that
\beq
\alpha_j\alpha_{M-j+1}=\exp(\varphi _j),\ \ \ \ j=1,\ldots,N.
\eeq
Provided we also specialise the arbitrary antisymmetric matrix $\cC$ to $\cC_{\cR}$ (cf.~(\ref{calCR})), we therefore conclude equality of the reduced Darboux type solitons to the reduced Kyoto solitons. 

We continue by using the new representation (\ref{repn}) to show the square property of $\tau_0$, cf.~(\ref{tsq}). As it stands, the matrix $\cC_{\cR}+\cD\tilde{\cA}_0\cD$ is not antisymmetric. But we have
\beq
\tilde{\cA}_{0,jk}=\frac{a_k}{a_j+a_k}=\frac{1}{2}\left(1-\frac{a_j-a_k}{a_j+a_k}\right),
\eeq
so that we can write
\beq
\tilde{\cA}_0=\frac{1}{2}\zeta\otimes\zeta -\frac{1}{2}\cA,\ \ \ \ \ \zeta= (1,\ldots,1),
\eeq
where $\cA$ is the antisymmetric matrix with elements
\beq
\cA_{jk}=\frac{a_j-a_k}{a_j+a_k}.
\eeq
As shown in Appendix~B, the determinant
\beq\label{tdet}
\tau_0=|\cC_{\cR}+\frac{1}{2}(\cD\zeta\otimes\cD\zeta-\cD\cA\cD)|
\eeq
equals the square of
\beq\label{tpf}
\tau ={\rm Pf}(\cC_{\cR}-\cD\cA\cD/2),
\eeq
and the pfaffian can be explicitly evaluated. The resulting formula is
\beq\label{KS}
\tau =\sum_{l=0}^N\sum_{1\le j_1<\cdots<j_l\le N}
\prod_{m=1}^l d_{j_m}\cdot \prod_{1\le m<n\le l}c_{j_mj_n},
\eeq
with
\beq
d_j=-\frac{1}{2}\chi_j^2\cA_{j,M-j+1},\ \ \ j=1,\ldots,N,
\eeq
\beq\label{cij}
c_{ij}=\cA_{ij}\cA_{i,M-j+1}\cA_{j,M-i+1}\cA_{M-j+1,M-i+1},\ \ \ \ i,j=1,\ldots,N,
\eeq
cf.~(\ref{KSalt}).

Next, we specialize $a_1,\ldots,a_M$ as in (\ref{as1})--(\ref{as2}). Using (\ref{chirep}), this yields
\beq
d_j=\frac{\cos(c)}{2i\sin(c)}\exp(\varphi_j-2\sin(c)[v\kappa_j+u\kappa_j^{-1}])
\eeq
\beq
c_{ij}=\left( \frac{\kappa_i-\kappa_j}{\kappa_i+\kappa_j}\right)^2\frac{\kappa_i^2+\kappa_j^2+2\cos(2c)\kappa_i\kappa_j}{\kappa_i^2+\kappa_j^2-2\cos(2c)\kappa_i\kappa_j},
\eeq
where
\beq
\kappa_j= \exp(\theta_j),\ \ \ \ j=1,\ldots,N.
\eeq
We now substitute (\ref{vphi}), and then choose $c=\pi/3$ and
  introduce new parameters
\beq
k_j=\sqrt{3}\kappa_j,\ \ \ \exp(s_j)=\exp(q_j/2)F_j(\theta)/2,\ \ \ \ j=1,\ldots,N,
\eeq
\beq
x=-v,\ \ \ \ y=-u.
\eeq
Then we finally obtain
\beq\label{dj}
d_j=\exp(s_j+xk_j+3yk_j^{-1}),
\eeq
\beq\label{ssc}
c_{ij}=\left( \frac{k_i-k_j}{k_i+k_j}\right)^2\frac{k_i^2+k_j^2-k_ik_j}{k_i^2+k_j^2+k_ik_j}.
\eeq

With $(q,\theta)$ varying over $\cP$ (\ref{cPT}), we have $k_j>0$ and $\exp(s_j)>0$, so that $d_j>0$ and $c_{ij}>0$. The function $\tau$ is therefore positive, implying $\tau_0$ is positive. Note that $\tau$ can be rewritten in the form (\ref{taun}), with $M$, $\xi_{j,n}$ and $f_{jk}$ replaced by $N$, $d_j$ and $c_{jk}$, respectively.

Last but not least, the tau-function (\ref{KS}) with the substitutions (\ref{dj})--(\ref{ssc}) coincides with the tau-function obtained in Section~2 of~\cite{kash}. Moreover, combining the relations
\beq
\exp(\Psi)=\tau_2/\tau_0,\ \ \ \tau_0=\tau_1,\ \ \ \ \tau_0=\tau^2,
\eeq
(cf.~(\ref{Psi}), (\ref{symt}) and (\ref{tsq})), and the 2D Toda equation of motion (\ref{Totau}) with $n=1$, we deduce
\beq
2\partial_u\partial_v \ln \tau =1-\exp(\Psi).
\eeq
Therefore, the function $\exp(\Psi)$ coincides with the function $v$ employed in~\cite{kash}.



\renewcommand{\theequation}{\thesection.\arabic{equation}}

\setcounter{equation}{0}

\section{Concluding remarks}

(i) ({\em Integrability on $\Omega_P$})
It is not hard to see that the flows generated by the Hamiltonians
\beq\label{Hr}
H_r={\rm Tr} (L^r),\ \ \ r\in\R^{*},
\eeq
leave $\Omega_P$ invariant, provided
\beq
\cos(rc)=\cos(c).
\eeq
Indeed, this readily follows from (\ref{flows}) and (\ref{xpr}), noting that (\ref{Hr}) corresponds to (\ref{Hf}) with $h(z)=e^{rz}$. Thus the restricted phase space $\langle \Omega_P,\sum_{j=1}^Ndq_j\wedge d\theta_j\rangle$ and the Hamiltonians
\beq
H_r,\ \ \pm r=1+2\pi k/c,\ \ \ \ k\in\Z,
\eeq
give rise to an integrable system on $\Omega_P$. For the Tzitzeica case $c=\pi/3$, one can choose as the $N$ independent Hamiltonians the power traces
\beq
{\rm Tr} (L^{1+6k}),\ \ \ \ k=0,1,\ldots,N-1.
\eeq
\vspace{4mm}

\noindent
(ii) ({\em Space-time trajectories})
As we have shown in Theorem~4.1, on $\Omega_P$ the tau-function \eqref{tauL} equals the Tzitzeica tau-function, and hence is real, cf.~Proposition~3.1. This reality property is far from obvious for $n=0,1$, but for $n=2$ reality on all of $\Omega$ is in fact clear from real-valuedness of $A$, cf.~\eqref{A1}. Specializing  to $\Omega_P$, we have
\beq
\tau_2(u,v)=\prod_{i=1}^{2N}[1-\exp(x_i^{+}(u,v))]\cdot\prod_{j=1}^N[1+\exp(x_j^{-}(u,v))].
\eeq
Now it follows from Section~6 that $\tau_0(u,v)$ is positive. Therefore, the Tzitzeica solution
\beq
\Psi(u,v)=\ln (\tau_2(u,v)/\tau_0(u,v)),
\eeq
has logarithmic singularities at the zeros of $\tau_2$, as we have already seen for $N=1$ in the previous section. We proceed to analyze these in terms of the space-time coordinates $t$ and $y$, cf.~(\ref{ty}). 

Fixing $t$, the function $\tau_2$ is positive for $y\to\pm\infty$, and has $2N$ sign changes for finite~$y$. The locations of these zeros on the $y$-axis are distinct for all $t$ (since $x_{2N}^{+}<\cdots<x_1^{+}$). Thus one obtains $2N$ space-time trajectories, which may be viewed as the locations of the particles in the $N$-soliton solution. For $t\to\pm \infty$ these trajectories exhibit soliton scattering, with a factorized phase shift in terms of the function $\ln (c_{ij})$, cf.~(\ref{ssc}). A more systematic analysis would be feasible by following the path laid out in Chapter~7 of~\cite{aa2}, but this is beyond our present scope. See, however, Fig.~2 for a plot of the 2-soliton collision. 
\begin{figure}[h]
	\centering
	\includegraphics*[width=6in]{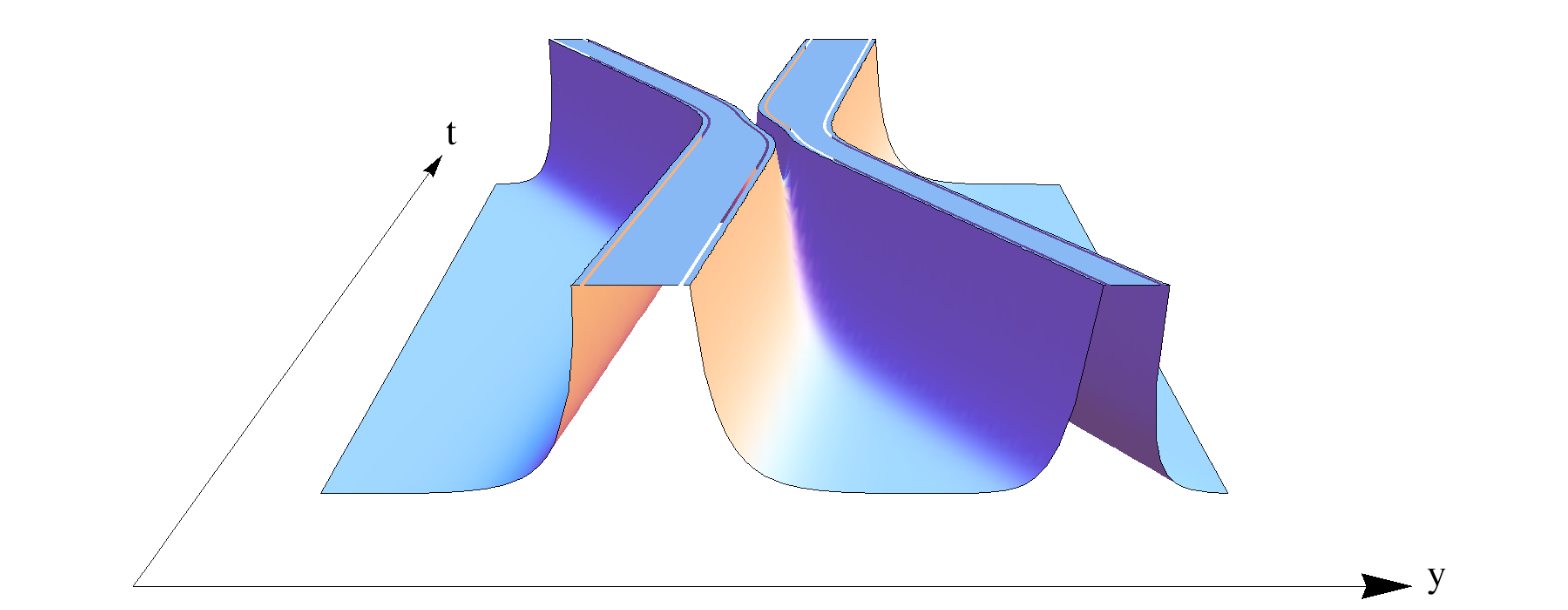}
	\caption{Two-soliton interaction. $-\Psi(t,y)$ is shown, the $y$-axis points left to right and the $t$-axis points into the page. The `plateau' approximates the region in which $\Psi(t,y)$ is complex-valued.}
\end{figure}
\vspace{4mm}

\noindent
(iii) (\emph{Comparison with} \cite{KP})
The 2D Toda soliton tau-functions studied in~\cite{KP} do not include the above real-valued Tzitzeica tau-functions. This is because in~\cite{KP} the condition $\tau_{-n}=\overline{\tau}_n$ is imposed (cf.~(2.10) in~\cite{KP}), whereas for the Tzitzeica case we have $\tau_{-1}=\tau_2\ne\tau_1=\overline{\tau}_1$. 
To accommodate this different starting point, the fusion procedure in Section~2 of~\cite{KP} starts from tau-functions that in terms of the relativistic Calogero-Moser systems amount to
\beq
\tau_n=|{\bf 1}_M +\exp(-2inc)A(x(u,v))|,\ \ \ A(x)=\diag(\exp(x_1^{+}),\ldots,\exp(x_M^{+})).
\eeq
Even so, we could specialize the fusion identities of~\cite{KP}, since they only pertain to the action variables, and the dependence on the latter is governed by the same function ((2.15) in~\cite{KP}) for particles and antiparticles. 
\vspace{4mm}

\noindent
(iv) (\emph{Quantum analogs}) We have chosen $N_{+}=2N$ and $N_{-}=N$ throughout the paper (cf.~(\ref{NN})), but we could just as well have started from $N$ particles and $2N$ antiparticles. Indeed, this amounts to working with the `charge conjugate' dual Lax matrix
\beq
A^C(x)=\diag (\exp(x_1^{+}),\ldots,\exp(x_N^{+}), -\exp(x_1^{-}),\ldots,-\exp(x_{2N}^{-})),
\eeq
and tau-function
\beq
\tau_n^C(u,v)= \det ({\bf 1}_{3N}-\exp(i\pi (1-2n)/3)A^C(x(u,v))),
\eeq
instead of (\ref{A1}) and (\ref{tauL}). Clearly, one can then proceed in the same way as before.

On the other hand, one cannot enlarge the 3-body--soliton correspondence without venturing into complexified phase spaces, losing control of the action-angle map in the process. More precisely, there exist real-valued Tzitzeica soliton tau-functions that correspond to two different charges, hence giving rise to breather-like bound states. (Indeed, this is not hard to see from the explicit form of the solitons at the end of Section~3; for example, one need only perform a suitable analytic continuation in the 2-soliton solution to obtain the 1-breather solution.) However, it can be shown that these do not correspond to subspaces of the real Calogero-Moser phase spaces with arbitrary $N_{+}$ and $N_{-}$. 

It may be expected that this picture persists on the quantum level. To be specific, a suitable reduction of the unitary joint eigenfunction transform for the commuting quantum Hamiltonians with $c=\pi/3$ and $3N$ variables (which, to be sure, is not even known to exist to date) should give rise to a unitary transform with $N$ variables, which can be interpreted as a transform for $N$ quantum solitons with the same charge. However, no such reductions are likely to exist when different charges are involved. In particular, if quantum breathers for the Tzitzeica quantum field theory do exist (a feature that is taken for granted in most of the work within the form factor program), then they are not likely to have analogs in the Calogero-Moser particle picture. (By contrast, for the sine-Gordon case a complete correspondence is expected~\cite{Kiev}.)
\vspace{4mm}

\noindent
(v) ({\em Demoulin solitons})
From Propositions~2.2 and~2.3 it is clear that for 
\beq
c=\pi/6
\eeq
the tau-function satisfies
\beq
\tau_{-n+1}=\tau_n,\ \ \ \ \tau_{n+6}=\tau_n.
\eeq
As a consequence, we have
\beq\label{ttt}
\tau_1=\tau_0,\ \ \tau_2=\tau_5,\ \ \tau_3=\tau_4.
\eeq
Setting
\beq
h_n=\tau_{n+1}\tau_{n-1}/\tau_n^2=\exp(\phi_n-\phi_{n-1}), 
\eeq
(where we used (\ref{phitau})), we deduce
\beq
h_1=h_0,\ \ h_2=h_5,\ \ h_3=h_4,
\eeq
and
\beq\label{hpr}
h_1h_2h_3=1.
\eeq
Now from (\ref{To}) we have
\beq
(\ln h_n)_{uv}=2h_n-h_{n+1}-h_{n-1}.
\eeq
Hence, setting
\beq
h=h_1=\tau_2/\tau_1,\ \ \ k=h_3=\tau_2/\tau_3,
\eeq
and using (\ref{hpr}), we obtain
\beq
(\ln h)_{uv}=h-\frac{1}{hk},\ \   (\ln k)_{uv}=k-\frac{1}{kh}.
\eeq

This system of relativistic wave equations is the Demoulin system, cf.~\cite{rosc2}, p.~343, Eq.~(9.58). Therefore the $c=\pi/6$ tau-functions with parameters in $\cP$ yield real-valued Demoulin solitons in tau-function form. In particular, from (\ref{quv})--(\ref{eq:pp1}) we see that the one-soliton case is given by (\ref{ttt}) and
\beq
\tau_1=1+2G+G^2,\ \ \tau_2=1+G^2,\ \ \tau_3=1-2G+G^2,\ \  G=\exp(q/2-ve^{\theta}-ue^{-\theta}).
\eeq

\appendix
																																												
\section*{Appendices}


\section{Fusion}          

In this appendix we detail how the formula \eqref{pI} for $p_I(\hat\theta)^2$  leads to  \eqref{limI}
in the collapsing torus limit. We begin by recalling that in this limit $p_I(\hat\theta)$ vanishes, unless the subset $I$ of $\{1,\dots,3N\}$ has the form
\beq
I=\bigcup_{j\in J}I_j                   
\eeq
where 
\beq
J=I\cap\{1,\dots,2N\},
\ \ \ 
I_j=\{j\},\ \ I_{N+j}=\{N+j,2N+j\},\ \ j=1,\dots,N,
\eeq
 cf.~the paragraph preceding \eqref{If}.  For a subset $I$ of this form, \eqref{pI} entails
\begin{equation}\label{eq:pI}
p_I(\hat\theta)^2=\prod_{j\in J,k\not\in J}\left(
\prod_{m\in I_j,n\in I_k}\frac{\sinh^2((\hat\theta_m-\hat\theta_n)/2)+\sin^2c}{\sinh^2((\hat\theta_m-\hat\theta_n)/2)}\right).
\end{equation}
 
Next, we substitute \eqref{qhq0} and \eqref{qhq1} in this formula and cancel terms in the interior product using fusion identities. A general fusion identity, in which the sets $I_k$ are of arbitrary cardinality, can be found in \cite{KP}, but here we only detail the special cases we need.

Using the trigonometric/hyperbolic identity 
\beq
\sinh^2x+\sin^2y=\sinh(x+iy)\sinh(x-iy),
\eeq
the general pair factor in \eqref{eq:pI} can be written as
\beq
\frac{\sinh((\hat\theta_m-\hat\theta_n)/2+ic)\sinh((\hat\theta_m-\hat\theta_n)/2-ic)}{\sinh^2((\hat\theta_m-\hat\theta_n)/2)}.
\eeq                                
For a soliton-soliton interaction included in $J$, there is only one pair factor in the product. Specifically,  we have $I_j=\{j\}$, $I_{k}=\{k\}$ and (from \eqref{qhq0} and \eqref{eq:eta})  $\hat\theta_j=\eta_j,\ \hat\theta_k=\eta_k$, so the contribution to the product for a soliton-soliton pair is
\beq
\frac{\sinh^2((\eta_j-\eta_k)/2)+\sin^2c}{\sinh^2((\eta_j-\eta_k)/2)}=
\frac{\sinh((\eta_j-\eta_k)/2+ic)\sinh((\eta_j-\eta_k)/2-ic)}{\sinh^2((\eta_j-\eta_k)/2)}=
\frac{s(2)s(-2)}{s^2(0)},
\eeq                                                                  
where we have set $s(l)=\sinh((\eta_j-\eta_k+lic)/2)$.
For a soliton-breather interaction we have $I_j=\{j\}$, $I_{N+k}=\{N+k,2N+k\}$ and $\hat\theta_j=\eta_j$, $\hat\theta_{N+k}=\eta_k+ic$, $\hat\theta_{2N+k}=\eta_k-ic$, yielding two pair factors                                                                 
\beq
\frac{s(1)s(-3)}{s^2(-1)}
\frac{s(3)s(-1)}{s^2(1)}=
\frac{s(3)s(-3)}{s(1)s(-1)}=
\frac{\sinh^2((\eta_j-\eta_k)/2)+\sin^2(3c/2)}{\sinh^2((\eta_j-\eta_k)/2+\sin^2(c/2)}.
\eeq          
Finally, for a breather-breather interaction in $J$, we get $I_{N+j}=\{N+j,2N+j\}$, $I_{N+k}=\{N+k,2N+k\}$, and hence four pairs whose contribution to the product is
\beq
\frac{s(2)s(-2)}{s^2(0)}
\frac{s(4)s(0)}{s^2(2)}
\frac{s(2)s(-2)}{s^2(0)}
\frac{s(0)s(-4)}{s^2(-2)}
=
\frac{s(4)s(-4)}{s^2(0)}=
\frac{\sinh^2((\eta_j-\eta_k)/2)+\sin^2(2c)}{\sinh^2((\eta_j-\eta_k)/2)}.
\eeq         
Using $\eta_{l+N}=\eta_l$ and $c_l=c$, $c_{N+l}=2c$ for $l=1,\dots,N$ (cf.~\eqref{eq:eta}), these  three cases may be 
encoded in the single formula
\beq
\frac{\sinh^2((\eta_j-\eta_k)/2)+\sin^2((c_j+c_k)/2)}{\sinh^2((\eta_j-\eta_k)/2+\sin^2((c_j-c_k)/2)},\ \ \ j,k=1,\ldots,2N.
\eeq 
Hence \eqref{limI} results.        


\section{Pfaffian identities}

In this appendix we show that the determinant on the rhs of (\ref{tdet}) equals the square of the pfaffian on the rhs of (\ref{tpf}), and that the latter has the explicit form (\ref{KS})--(\ref{cij}). We proceed in a slightly more general way, since this eases the notation and adds insight.

First we note that the determinant is of the form $|v\otimes v+A|$, where $A$ is a $2N\times 2N$ antisymmetric matrix. Now we invoke the following lemma.

\begin{lem}
Assume $A$ is an antisymmetric $2N\times 2N$ matrix and $v\in\C^{2N}$. Then one has
\beq\label{id}
|v\otimes v+A|=|A|.
\eeq
\end{lem}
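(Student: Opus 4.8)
The plan is to prove $|v\otimes v+A|=|A|$ for an antisymmetric $2N\times 2N$ matrix $A$, exploiting the fact that a rank-one perturbation of a determinant admits a closed form via the matrix determinant lemma. The key observation is that the antisymmetry of $A$ forces the correction term to vanish. First I would handle the generic case in which $A$ is invertible, and then recover the general statement by a continuity/density argument.

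For the invertible case, I would apply the matrix determinant lemma in the form
\beq
|A+v\otimes v|=|A|\,(1+v^t A^{-1}v).
\eeq
The heart of the matter is then to show that $v^t A^{-1}v=0$. This is where antisymmetry enters: since $A^t=-A$, one has $(A^{-1})^t=(A^t)^{-1}=-A^{-1}$, so $A^{-1}$ is itself antisymmetric. For any antisymmetric matrix $B$ and any vector $v$, the scalar $v^t B v$ equals its own transpose, hence $v^t B v=(v^t Bv)^t=v^t B^t v=-v^t Bv$, forcing $v^t Bv=0$. Applying this with $B=A^{-1}$ gives $v^t A^{-1}v=0$, and therefore $|A+v\otimes v|=|A|$ as claimed.

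To remove the invertibility hypothesis, I would argue by continuity. Both sides of \eqref{id} are polynomials in the entries of $A$ and $v$, and the invertible antisymmetric matrices are dense in the space of all antisymmetric matrices (for instance, $A+\epsilon J_0$ with $J_0$ a fixed nonsingular antisymmetric matrix, such as the standard symplectic form, is invertible for all but finitely many $\epsilon$). Since the identity holds on this dense set and both sides are continuous (indeed polynomial), it holds everywhere.

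The main obstacle is essentially conceptual rather than computational: recognizing that the whole statement reduces to the vanishing of the quadratic form $v^t A^{-1}v$, which in turn is immediate from the antisymmetry of $A^{-1}$. Once this is isolated, the remaining steps—invoking the determinant lemma and passing to the singular case by density—are routine. It is worth noting that $2N$ being even is what makes $|A|$ generically nonzero (an odd-dimensional antisymmetric matrix is always singular, so the determinant lemma would not directly apply); this is why the lemma is naturally stated in even dimension, matching the pfaffian context of \eqref{tpf}.
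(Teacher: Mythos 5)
Your proposal is correct and follows essentially the same route as the paper: reduce to invertible $A$ by continuity, apply the matrix determinant lemma to get the factor $1+v^tA^{-1}v$, and observe that this quadratic form vanishes because $A^{-1}$ inherits antisymmetry from $A$. Your version merely spells out the density argument and the vanishing of $v^tBv$ for antisymmetric $B$ in slightly more detail than the paper does.
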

\begin{proof} By continuity it suffices to prove this for an invertible $A$. Then we have
\beq
|v\otimes v+A|=|A| |{\bf 1}_{2N}+A^{-1}v\otimes v|=|A|(1+(\overline{v},A^{-1}v)).
\eeq
Now since $(A^{-1})^t=(A^t)^{-1}=-A^{-1}$, it follows that $A^{-1}$ is also antisymmetric. Hence the inner product $(\overline{v},A^{-1}v)=v^tA^{-1}v$ vanishes, and (\ref{id}) results.
\qedhere
\end{proof}

As a consequence, we have
\beq
\tau_0=|\cC_{\cR}-\cD\cA\cD/2|=\tau^2,
\eeq
with $\tau$ given by (\ref{tpf}). We continue to show (\ref{KS}). Again, we first study a slightly more general situation. We begin by recalling the definition
\beq\label{Pf}
{\rm Pf}(A)=\sum (-)^{{\rm sgn}(\sigma)} A_{i_1j_1}\cdots A_{i_Nj_N},\ \ \ A\in M_{2N}(\C),\ \ A^t=-A,
\eeq
where the sum is over all index choices with
\beq
1=i_1<\cdots<i_N\le 2N-1,\ \ \ \ i_1<j_1,\ldots , i_N<j_N,
\eeq
and $\sigma$ is the permutation
\beq
\sigma\, :\, (i_1,j_1,\ldots,i_N,j_N)\mapsto (1,2,\ldots,2N).
\eeq
Denoting ${\rm Pf}(A)$ by $[1,2,\ldots,2N]$, this implies an expansion formula
\beq\label{ex}
[1,2,\ldots,2N]=\sum_{n=2}^{2N}(-)^n[1,n][2,\ldots,\hat{n},\ldots,2N],
\eeq
where the  notation on the rhs will be clear from context. Consider now the pfaffian of the antisymmetric matrix
\beq
B=
\left(\begin{array}{cc}
0 & \lambda\cR_N \\
-\lambda\cR_N & 0 
\end{array} \right)+A,\ \ \ \ \lambda\in\C.
\eeq
The relevant (upper triangular) part of $B$ is
\beq
\begin{array}{ccccccc}
A_{12} & \cdot  & \cdot& \cdot& \cdot& \cdot & \lambda+A_{1,2N}\\
  & \cdot & \cdot& \cdot& \cdot& \cdot& \cdot\\
  &  & A_{N-1,N} & A_{N-1,N+1}  &  \lambda +A_{N-1,N+2}    & \cdot & \cdot\\
  &  &  & \lambda +A_{N,N+1}  &  A_{N, N+2}  & \cdot  & \cdot\\
   &  &  &  &  A_{N+1,N+2}  & \cdot & \cdot \\
    &  &  &  &  & \cdot & A_{2N-2,2N} \\
     &  &  &  &  &  &  A_{2N-1,2N}\\
\end{array} 
     \eeq
Therefore we have
\beq
{\rm Pf}(B)=\sum_{l=0}^N\lambda^{N-l}S_l,
\eeq
where $S_l$ is a sum over certain `minors'. Specifically, taking (\ref{ex}) into account to check signs, we readily get
\beq
S_k=\sum_{1\le j_1<\cdots<j_k\le N}[j_1,\ldots,j_k,2N-j_k+1,\ldots,2N-j_1+1].
\eeq
(Hence in particular $S_0=1, S_{2N}=[1,\ldots,2N]$.)

Applying this expansion to $\tau$ (\ref{tpf}), we obtain
\beq\label{pfA}
\tau =\sum_{l=0}^N\sum_{1\le j_1<\cdots<j_l\le N}\prod_{m=1}^l\left( -\frac{1}{2}\chi_{j_m}^2\right)\cdot [j_1,\ldots,j_l,2N-j_l+1,\ldots, 2N-j_1+1]_{\cA},
\eeq
where $[...]_{\cA}$ denotes the reduced pfaffians for $\cA$. We now obtain an explicit formula for the latter.

\begin{lem}
Suppose $\cA\in M_{2l}(\C)$ is given by
\beq
\cA_{jk}=\frac{a_j-a_k}{a_j+a_k},\ \ \ \ j,k=1,\ldots,2l.
\eeq
Then we have
\beq\label{Aid}
{\rm Pf}(\cA)=\prod_{1\le j<k\le 2l}\cA_{jk}.
\eeq
\end{lem}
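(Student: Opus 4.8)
The plan is to prove \eqref{Aid} by induction on $l$, using the first-row pfaffian expansion \eqref{ex} that has just been recorded. The base case $l=1$ is immediate, since the pfaffian of the $2\times 2$ matrix with off-diagonal entry $\cA_{12}$ is simply $\cA_{12}=(a_1-a_2)/(a_1+a_2)$. For the inductive step I would expand along the first row,
\beq
\mathrm{Pf}(\cA)=[1,2,\ldots,2l]=\sum_{n=2}^{2l}(-)^n[1,n]\,[2,\ldots,\hat n,\ldots,2l],
\eeq
and note that $[1,n]=\cA_{1n}=(a_1-a_n)/(a_1+a_n)$, while $[2,\ldots,\hat n,\ldots,2l]$ is the pfaffian of the principal submatrix of $\cA$ indexed by $\{2,\ldots,2l\}\setminus\{n\}$. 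This submatrix has exactly the form covered by the lemma, now with the reduced parameter set, so the induction hypothesis gives $[2,\ldots,\hat n,\ldots,2l]=\prod_{2\le j<k\le 2l,\ j,k\ne n}(a_j-a_k)/(a_j+a_k)$.

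Substituting these two facts, the asserted identity reduces to the purely rational identity
\beq\label{ratid}
\sum_{n=2}^{2l}(-)^n\,\frac{a_1-a_n}{a_1+a_n}\prod_{\substack{2\le j<k\le 2l\\ j,k\ne n}}\frac{a_j-a_k}{a_j+a_k}
=\prod_{1\le j<k\le 2l}\frac{a_j-a_k}{a_j+a_k}.
\eeq
Dividing through by $\prod_{2\le j<k\le 2l}(a_j-a_k)/(a_j+a_k)$ and collecting the factors of this product that involve the index $n$, the alternating signs $(-)^n$ cancel and \eqref{ratid} becomes
\beq\label{ratid2}
\sum_{n=2}^{2l}\frac{a_1-a_n}{a_1+a_n}\prod_{\substack{2\le m\le 2l\\ m\ne n}}\frac{a_n+a_m}{a_n-a_m}
=\prod_{k=2}^{2l}\frac{a_1-a_k}{a_1+a_k}.
\eeq
I would prove \eqref{ratid2} by treating both sides as rational functions of $a_1$: each side is a ratio of polynomials of equal degree $2l-1$ with only simple poles, located at $a_1=-a_n$. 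A direct residue computation at $a_1=-a_j$ gives $-2a_j\prod_{m\ne j}(a_j+a_m)/(a_j-a_m)$ on both sides, so the difference of the two sides is a rational function with no poles that stays bounded at infinity, hence a constant.

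The crux, and the step I expect to be the main obstacle, is pinning this constant to zero. This amounts to matching the two sides as $a_1\to\infty$, that is, to the symmetric-function identity
\beq\label{infid}
\sum_{n=2}^{2l}\prod_{\substack{2\le m\le 2l\\ m\ne n}}\frac{a_n+a_m}{a_n-a_m}=1
\eeq
in an odd number $2l-1$ of variables. I would establish \eqref{infid} through a generating-function argument: the rational function $\Phi(z)=\prod_{m=2}^{2l}(z+a_m)/(z-a_m)$ tends to $1$ as $z\to\infty$ and has simple poles only at $z=a_n$, so its partial-fraction expansion reads $\Phi(z)=1+\sum_n\rho_n/(z-a_n)$ with $\rho_n=2a_n\prod_{m\ne n}(a_n+a_m)/(a_n-a_m)$; evaluating at $z=0$ and using $\Phi(0)=(-)^{2l-1}=-1$ (where the odd parity is essential) yields $-1=1-\sum_n\rho_n/a_n$, which is precisely \eqref{infid}. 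With \eqref{infid} in hand, \eqref{ratid2}, hence \eqref{ratid}, follows and the induction closes. As a cross-check, one can instead fix the constant by evaluating \eqref{ratid2} at $a_1=a_2$, where both sides vanish, which reduces to the even-variable analogue of \eqref{infid} (whose value is $0$). I expect the pfaffian expansion and residue matching to be routine, with the only real content sitting in the elementary but slightly delicate parity argument behind \eqref{infid}.
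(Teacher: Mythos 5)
Your proof is correct, but it takes a genuinely different route from the paper. The paper does not prove this lemma in-text at all: it cites \cite{NimHL} for the evaluation and offers only a consistency check, namely that squaring \eqref{Aid} gives $|\cA|=\prod_{j<k}\cA_{jk}^2$, which under the substitution $a_j\to\exp(2q_j)$ becomes the $\tanh$-determinant identity \eqref{Cau} following from Cauchy's identity \eqref{Cauchy} --- an argument that, as the paper itself says, establishes the formula only ``save for its sign''. Your induction via the first-row expansion \eqref{ex} is self-contained and settles precisely the point the paper defers to the reference, namely the sign. I checked the two delicate steps: the cancellation of the alternating signs is right (pulling the factors $\cA_{jn}^{-1}$, $j<n$, out of the quotient of products produces $(-1)^{n-2}$, which kills the $(-)^n$ from \eqref{ex}), and your generating-function evaluation is right: with $\Phi(z)=\prod_{m=2}^{2l}(z+a_m)/(z-a_m)$ one has $\rho_n=2a_nc_n$ where $c_n=\prod_{m\ne n}(a_n+a_m)/(a_n-a_m)$, and $\Phi(0)=(-1)^{2l-1}=-1$ yields $-1=1-2\sum_n c_n$, i.e.\ $\sum_n c_n=1$, pinning the constant to zero; the parity is indeed essential, since in an even number of variables the analogous sum is $0$, consistent with your cross-check at $a_1=a_2$. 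The only point worth adding is routine: run the argument for generic parameters (the $a_j$ distinct and nonzero, $a_j+a_k\ne 0$), since the residue matching, the partial-fraction expansion and the evaluation at $z=0$ need simple poles away from the origin, and then extend by continuity, the identity being polynomial after clearing denominators. In sum, the paper's route is shorter and ties the lemma conceptually to Cauchy's identity but is incomplete without the external reference, whereas yours is elementary, complete, and determines the sign directly.
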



This explicit evaluation is proved in~\cite{NimHL}. It is presumably known, but we do not know a reference. Indeed, up to its sign this lemma can be viewed as a consequence of Cauchy's identity. To see this, note that the lemma entails
\beq\label{Cau}
|\cA|=\prod_{1\le j<k\le 2l}\cA_{jk}^2.
\eeq
If we now substitute $a_j\to \exp(2q_j)$, then (\ref{Cau}) becomes
\beq
|(\tanh (q_j-q_k))|_{2l\times 2l}=\prod_{1\le j<k\le 2l}\tanh^2(q_j-q_k),
\eeq
an identity that readily follows from Cauchy's identity (cf.~(B42) in~\cite{comp}). Thus, Cauchy's identity implies (\ref{Cau}), so (\ref{Aid}) follows save for its sign.

Returning to (\ref{pfA}), we substitute (\ref{Aid}) in the $\cA$-pfaffians, yielding
\beq\label{pfeq}
\prod_{j<k,j,k\in \{j_1,\ldots,j_l,2N-j_l+1,\ldots,2N-j_1+1\}}\cA_{jk}.
\eeq
Introducing
\beq
c_{jk}= \cA_{jk}\cA_{j,2N-k+1}\cA_{k,2N-j+1}\cA_{2N-k+1,2N-j+1},\ \ \ 1\le j<k\le N,
\eeq
we can write (\ref{pfeq}) as
\beq
\prod_{m=1}^l\cA_{j_m,2N-j_m+1}\cdot \prod_{j<k,j,k\in\{ j_1,\ldots,j_l\}} c_{jk}.
\eeq
Substituting this in (\ref{pfA}) and setting
\beq
d_j=-\frac{1}{2}\chi_j^2\cA_{j,2N-j+1},\ \ \ \ j=1,\ldots,N,
\eeq
we finally obtain
\beq\label{KSalt}
\tau=\sum_{l=0}^N\sum_{1\le j_1<\cdots <j_l\le N}\prod_{m=1}^l d_{j_m}\cdot
\prod_{j<k,j,k\in\{ j_1,\ldots,j_l\}}c_{jk},
\eeq
which amounts to (\ref{KS}).   


\newcommand{\Pf}{\mathrm{Pf}}
\section{Solutions obtained by binary Darboux transformation}

The $A_\infty$ Toda lattice (1.3) has a family of solutions obtained by an iterated binary Darboux transformation. We follow the derivation given in \cite{NimWil}, but make some small modifications so as to conform with the notation and conventions used in the rest of this paper. 

A Lax pair for \eqref{To} is given by
\beq
\label{lax}	
A_{n,u}=i\exp(\phi_{n+1}-\phi_{n})A_{n+1},\quad 
-A_{n,v}=iA_{n-1}+\phi_{n,v}A_n.
\eeq
(Thus, equality of cross derivatives entails (\ref{To}).)
It is covariant with respect to the Darboux transformation \cite{Mat}
\beq
\tilde A_n=A_{n,v}-a_{n,v}a_n^{-1}A_n,
\eeq
\beq \tilde\phi_n=\phi_n+\ln(a_{n}/a_{n-1}),
\eeq
where $a_n$ is an eigenfunction (i.e., any particular solution of \eqref{lax}). The adjoint Lax pair is 
\beq
\label{adlax}	
-B_{n,u}=i\exp(\phi_{n}-\phi_{n-1})B_{n-1},\quad
B_{n,v}=iB_{n+1}+\phi_{n,v}B_n.
\eeq

Given an eigenfunction $a_n$ and an adjoint eigenfunction $b_n$ for a seed solution $\phi_n=\phi_n^0=\ln(\tau^0_{n+1}/\tau^0_{n})$ to (1.3), an eigenfunction potential $\Omega(a_n,b_n)$ is defined, consistently and uniquely up to an additive constant, by the three requirements
\beq
\Omega(a_n,b_n)_{,u}    =   -i\exp(\phi^0_{n+1}-\phi^0_{n})b_{n}a_{n+1},
 \eeq
 \beq
\Omega(a_n,b_n)_{,v}   =    -ib_{n+1}a_{n},
\eeq
\beq
\Omega(a_n,b_n)-\Omega(a_n,b_{n-1})   =    -b_{n}a_{n}.
\eeq
The Lax pair \eqref{lax} is also covariant with respect to the binary Darboux transformation
\beq\label{cov}
\hat A_n=A_n-a_{n}\Omega(A_n,b_n)/\Omega(a_n,b_n),
\eeq
\beq
\hat\phi_n=\phi_n^0+\ln(\Omega(a_{n},b_{n})/\Omega(a_{n-1},b_{n-1})),
\eeq
or, equivalently, $\hat\tau_n=\Omega(a_{n-1},b_{n-1})\tau^0_n$. (In (\ref{cov}) $A_n$ denotes the general solution to (\ref{lax}) with $\phi$ replaced by $\phi^0$.)

The $M$th iterated binary Darboux transformation is expressed in terms of an $M$-vector  $\mathbf A_n$, whose components satisfy \eqref{lax} with $\phi \to \phi^0$, and an $M$-vector $\mathbf B_n$, whose components satisfy \eqref{adlax} with $\phi\to\phi^0$. These eigenfunction and adjoint eigenfunction vectors are used to define an $M\times M$ matrix eigenfunction potential $\Omega_n=\Omega(\mathbf A_n,\mathbf B_n)$ satisfying
\beq\label{ef pot}
\Omega_{n,u}=-i\exp(\phi^0_{n+1}-\phi^0_{n})\mathbf B_{n}\otimes\mathbf A_{n+1},\quad
\Omega_{n,v}=-i\textbf B_{n+1}\otimes\mathbf A_{n},\quad
\Omega_{n}-\Omega_{n-1}=-\mathbf B_{n}\otimes\mathbf A_{n}.
\eeq
The solutions obtained are then expressed as
\beq
	    \hat\tau_{n}=\det(\Omega_{n-1})\tau^0_n.
\eeq
From now on we omit the $\hat\ $ and write $\hat\tau_n$ as $\tau_n$.

The $B_\infty$ Toda lattice has $\phi_{-n}=-\phi_{n}$ for $n\in\mathbb Z$, so that $\phi_0=0$ and the system is semi-infinite, expressed in terms of $\phi_1,\phi_2,\dots$. This reduction can be achieved by choosing solutions satisfying $\tau_{-n+1}=\tau_n$. In this reduction, it is consistent to choose adjoint eigenfunctions given by $\mathbf B_{n}=(-1)^{n}\mathbf A_{-n}$. It then follows from \eqref{ef pot} that
\begin{equation}\label{symm}
	\Omega_{-n}=-\Omega_{n-1}^t
\end{equation}
and so provided $M$ is even, $M=2N$ say, we indeed obtain $\tau_{-n+1}=\tau_{n}$.

In particular, to obtain soliton solutions for the $A_\infty$ Toda lattice, one chooses as seed solution the vacuum solution $\tau^0_n=1, \phi_n^0=0$, and then vectors of eigenfunctions and adjoint eigenfunctions given by
\begin{equation}\label{eq: Phi,Psi}
    \mathbf A_{n,j}=\alpha_j a_j^{-n}e^{-iva_j+iua_j^{-1}},\quad
    \mathbf B_{n,j}=\beta_j b_j^ne^{ivb_j-iub_j^{-1}},\ \ \ \ j=1,\ldots,M,
\end{equation}
where $\alpha_j,\beta_j,a_j$ and $b_j$ are arbitrary constants. Then, one obtains
\beq	    \tau_n=\det\left(C_{jk}+\beta_j\alpha_k\frac{b_j^{n}a_k^{-n+1}}{a_k-b_j}e^{-iv(a_k-b_j)+iu(a_k^{-1}-b_j^{-1})}\right),
\eeq
where $C_{jk}$ are arbitrary constants. In the $B_\infty$-reduction, one must take $b_i=-a_i$ and $\beta_i=\alpha_i$ giving 
\beq	    \tau_n=\det\left(C_{jk}+\alpha_j\alpha_k\frac{(-a_j)^{n}a_k^{-n+1}}{a_j+a_k}e^{-iv(a_j+a_k)+iu(a_j^{-1}+a_j^{-1})}\right).
\eeq
Here, $C_{jk}$ are arbitrary constants satisfying, in accordance with the symmetry condition \eqref{symm}, $C_{kj}=-C_{jk}$.

Thus we have arrived at the expression \eqref{tau D} that is the starting point of comparison with the Kyoto solitons in Section~6.

\section*{Acknowledgments}
We would like to thank E.~Ferapontov, E.~Corrigan, G.~Delius, C.~Korff, W.~Schief, M.~Niedermaier and C.~Rogers (in chronological order) for illuminating discussions and for making us aware of relevant literature. Also, thanks to suggestions of the referee the exposition of this paper has been improved.
The paper was completed while the authors were both visiting the Isaac Newton Institute, Cambridge, UK, as part of the programme on Discrete Integrable Systems (January-June 2009). They would like to thank the programme organisers for the invitations and the INI for financial support and hospitality.



\addcontentsline{toc}{section}{References}

\end{document}